\documentclass[a4paper,11pt]{article}
\usepackage{jheppub} % for details on the use of the package, please see the JINST-author-manual
\usepackage{lineno}
\usepackage{physics}
\usepackage{amsthm}
\usepackage{float}
\usepackage{comment}
\usepackage{tikz}
\usepackage{tikz-3dplot}
\usetikzlibrary{patterns}
\usetikzlibrary{decorations.pathmorphing, calc}
\usepackage{subcaption}
\usepackage[normalem]{ulem}

\theoremstyle{definition}

\theoremstyle{theorem}
\newtheorem{theorem}{Theorem}[section]
\theoremstyle{lemma}
\newtheorem{lemma}{Lemma}[section]

\nolinenumbers
\definecolor{babyblue}{rgb}{0.54, 0.81, 0.94}
\definecolor{softorange}{rgb}{0.95, 0.65, 0.35}
\definecolor{babyblue-dark}{rgb}{0.35, 0.65, 0.85}

\newcommand{\ii}{\mathrm{i}}

\title{Approximate locality, black hole complementarity and overlapping qubits}

\author[a]{ChunJun Cao,}
\author[a]{Gong Cheng,}
\author[b]{Alexander Jahn,}
\author[a]{Thomas Koutsikos}

\affiliation[a]{Department of Physics and Virginia Tech Center for Quantum Information Science and Engineering, Virginia Tech, Blacksburg, Virginia 24061, USA}

\affiliation[b]{Department of Physics, Freie Universit\"at Berlin, 14195 Berlin, Germany}

\emailAdd{cjcao@vt.edu}
\emailAdd{gongc@vt.edu}
\emailAdd{a.jahn@fu-berlin.de}
\emailAdd{kthomas01@vt.edu}

\abstract{We construct a toy model of an evaporating black hole using approximately local degrees of freedom acting on ``overlapping" qubits in which a version of black hole complementarity arises naturally. The operators corresponding to the radiation and the interior are identified as two distinct representations of the same fundamental algebra, thereby preventing the exact factorization of the Hilbert space into interior and exterior and avoiding the conventional no-cloning violations. We show how this toy model captures several qualitative and quantitative features of black hole evaporation and how the ability to account for this ``overlap" in the entropy calculation leads to the recovery of a Page curve.} 

\begin{document}
\maketitle

\section{Introduction}
\label{sec:intro}

A na\"ive application of unitary quantum mechanics to an evaporating black hole and its interior allows the information in the interior to be recoverable from the exterior Hawking radiation when enough of the black hole has evaporated. As a consequence, both the Hawking radiation and the interior would appear to encode the same information, violating the no-cloning theorem \cite{Wootters1982,DIEKS1982271}. In response, Susskind, Thorlacius, and Uglum \cite{Susskind_1993} proposed the idea of black hole complementarity, where, drawing from analogy with the Heisenberg uncertainty principle, they argued that this does not lead to inconsistencies as long as no single observer could verify the existence of cloning. However, unlike the Heisenberg principle, which one can now understand from the fundamental laws of quantum mechanics, the underlying principle that gives rise to such a black hole complementarity is still far from understood. The subsequent formulation of the firewall paradox \cite{Almheiri_2013} further sharpened the tension arising from the assumptions behind black hole complementarity. It showed that conditions that appear physically natural, specifically unitarity, the validity of semi-classical effective field theory, and ``no drama'' at the horizon, are mutually incompatible. Despite recent advances, its resolution remains controversial. For example, proposals in Ref.~\cite{Harlow:2013tf,Kim:2020cds} suggest that the complexity in decoding and in verifying the presence of cloning may be a promising direction for identifying the aforementioned fundamental principles that give rise to black hole complementarity. 

An earlier alternative argument approached this paradox from the point of view of non-locality. For example,  Ref.~\cite{Lowe_1995} argues that microcausality breaks down in such scenarios where spacelike separated observables that are in the exterior and the interior of the black hole no longer commute.  Refs.~\cite{Giddings_2010,Donnelly_2016, Giddings_2013} later suggested that  a strong locality violation need not be necessary and a non-violent form of locality violation may be sufficient to solve the information problem. 
Nevertheless, a natural theoretical framework that allows for a weak but tunable non-locality is still lacking --- existing proposals must either rely on the language of a local quantum field theory where exact locality was built in by default, and contort it with weak non-locality put in by hand, or by imposing gauge constraints where the appearance of locality is often obscured. 

Here we import an explicit quantum information-theoretic formulation called \emph{overlapping qubits} by Chao, Reichardt, Sutherland, and Vidick (CRSV) \cite{OQ} that could be valuable for a deeper understanding of approximate locality. These constructions provide a physically intuitive middle ground, where the exact tensor factorization in the operator algebra can be relaxed in a well-controlled fashion while maintaining the appearance of local operators. 

Applying it to a toy model of an evaporating black hole, we show that it provides a simple explanation why the appearance of cloning can emerge even within a unitary process. The core tenet is to allow simple, i.e., low-weight, operators that are spacelike separated to only be almost commuting, thereby violating strict microcausality, consistent with proposals like \cite{Donnelly_2016,Giddings_2013, Donnelly_20162}. However, this violation may require a prohibitive amount of experimental resources to identify, which is spiritually similar, but not identical, to the standard complexity protection arguments \cite{Harlow:2013tf,Akers:2022qdl}. 
In doing so, disjoint spatial subregions only approximately correspond to tensor factors, or more generally, commuting subalgebras. The breakdown of exact tensor factorization then allows us to circumvent the violations of no-cloning and entanglement inequalities. 
Surprisingly, the small overlaps also permit the substantial compression of a larger number of overlapping qubits into a much smaller Hilbert space, similar to the holographic principle where the volumetrically scaling degrees of freedom can seem to be compressed into a number of quantum degrees of freedom that are only area-scaling.

We construct a simple kinematic toy model by first positing a finite-dimensional fundamental Hilbert space spanned by the black hole microstates. We assume that some UV-complete theory of quantum gravity governs the unitary dynamics over this space. We then select quasi-local degrees of freedom in the form of overlapping qubits (or fermions) from this Hilbert space to mimic the approximate local effective field theory degrees of freedom. These overlapping qubits are split into two complementary sets that respectively describe the interior and radiation degrees of freedom. Then, by a simple linear algebra argument, we show that the interior operators can be reconstructed from the radiation degrees of freedom at late times. In other words, both the interior and exterior operators are explicitly realized as different representations of the same fundamental algebra --- there is no cloning because both the apparent radiation and the apparent interior operator algebras are equivalent up to a basis transformation.

Furthermore, we show that a Hawking-like entropy growth can be recovered for a naive observer whereas the Page curve can be recovered if the reconstruction takes into account the operator overlaps. Due to our technical limitations, we first demonstrate these principles explicitly using overlapping Majorana fermions; then we argue that a similar quantitative outcome can apply to more general formulations also. 

In Sec.~\ref{sec:overoper} we review the CRSV construction of overlapping qubits and fermions with minor adaptations for our current work. Then we introduce a toy model of an evaporating black hole in Sec.~\ref{sec:appbh} and discuss its physical implications qualitatively. In Sec.~\ref{sec:ferm_pagecurve}, we introduce more refined measures of entropy and characterize their behaviors in a system of overlapping fermions. We then discuss the limitations of this toy construction as well as generalizations in Sec.~\ref{sec:generalOQ}. Finally, we conclude with a few remarks on future work.  

\section{Subsystems and overlapping operators}
\label{sec:overoper}
Before we discuss black holes, we need to re-examine our intuitions of subsystems in quantum mechanics. Traditionally, subsystems are automatically formulated as tensor factors of the Hilbert space with broader generalizations as commuting subalgebras. 
For example, one of our default assumptions is that the Hilbert space of an $N$-qubit system is the tensor product of $N$ two-dimensional Hilbert spaces. 
This is natural as it allows us to perform operations on one constituent without altering the state of the others. 

However, physical subsystems rarely decompose in such an ideal way.  Realistic qubits can be subjected to weak long-range interactions, leakage, and imperfect control, all of which introduce small but systematic violations of subsystem independence, effectively making factorizations and commutativity approximate.  For example, focused laser beams used for single-qubit gates on trapped ions can easily leak onto neighboring ions, causing unintended rotations \cite{Parrado_Rodr_guez_2021,HAFFNER_2008,Piltz_2014}. Similar unintended interdependencies are expected in superconducting qubits due to crosstalk \cite{Sarovar_2020,crosstalksup}. In neutral-atom arrays, crosstalk and the Rydberg blockade --- depending on the way a qubit is encoded --- can also lead to unintended influence on distinct qubits \cite{Xia_2015,Henriet_2020,Saffman_2016}. As a result, nominally local operators acting on one qubit can have small but nonzero effects on others. 

More precisely,  operators acting on distinct isolated qubits are assumed to commute: if  operators $O_i, O’_j$ act on different qubits $i$ and $j$, then $[O_i,O’_j]=0$. 
However, for realistic systems, it is reasonable to ascribe to them operators where those acting on one qubit can have extremely small but nonzero noncommutativity with another.  If we describe these realistic single-qubit operators as $\{\tilde O_i\}$, $\{\tilde O'_j\}$ acting on physical qubits $i, j$, then the weak interdependencies discussed above can lead to small non-commutativity $\Vert[\tilde O_i,\tilde O'_j]\Vert<\epsilon$. This shows that our mathematical description of real qubits $i,j$ in terms of an exact tensor product of Hilbert spaces and unitary operators is only approximate.
When $\epsilon$ is sufficiently small, it can give one the illusion of having $N$ independent qubits when in reality the size of the Hilbert space can be exponentially smaller than $2^N$. 

Interestingly, similar approximate factorizations\footnote{Note that because overlapping operators generally do not form commuting subalgebras, this approximate factorization is not the same as simply generalizing to a subalgebra description where spacelike separate regions do not correspond to tensor products but to commuting subalgebras. } or microcausality violations can also appear in quantum gravity, raising the question whether exact factorizations are even possible at the fundamental level. In gravitational systems, local algebras need not strictly commute at spacelike separation due to gauge constraints and nonperturbative effects \cite{Donnelly_2016,Penington:2019kki,Akers:2022qdl,Balasubramanian:2022gmo,Balasubramanian:2022lnw,Antonini:2024yif,Almheiri_2013,Papadodimas_2014,Kabat_2014}. 
From a heuristic discretized perspective, placing a quantum field on a lattice and assigning qubits to lattice sites, quantum gravity generically induces overlaps that spoil exact tensor products. Similar ideas have also been used to understand the degrees of freedom counting problem in black holes and holography~\cite{Akers:2022qdl,Friedrich_2024,Cao_2025}. 
\subsection{Defining overlapping qubits}
\label{sec:overqubdisc}

The CRSV overlapping qubit \cite{OQ} construction realizes one concrete example of such ``approximate tensor factorization''. There can be other constructions of overlapping qubits beyond CRSV \cite{Friedrich_2024}, but we shall focus on their example for the sake of simplicity.

Let $\{P^{(I)}_i\}:=\{X^{(I)},Y^{(I)},Z^{(I)}\}$ be Pauli operators labelled by some index $I\in [N]$ and $j\in\{1,2,3\}$ correspond to the Pauli $X, Y$, and $Z$ operators respectively. Overlapping qubits are defined as operator sets that satisfy

\begin{align}[P^{(I)}_k,P^{(I)}_l]&=2 \ii \varepsilon_{klm}P^{(I)}_m\nonumber\\
\Vert [P^{(I)}_k,P^{(J)}_l] \Vert &={\cal O}(\epsilon)\label{eq:overlap},
\end{align}

These qubits are said to overlap because they almost but not exactly commute. However, a small overlap can lead to a drastic compression of Hilbert space dimension where CRSV demonstrated a construction where one can fit $O(e^{n\epsilon^2})$ such overlapping qubits into the Hilbert space of $n$ non-overlapping qubits, i.e., a $2^n$-dimensional Hilbert space. 

\begin{figure}[H]
\centering
% -------- Subfigure 1 --------
\begin{subfigure}{0.45\textwidth}
\centering
\begin{tikzpicture}[scale=1.3,line cap=round,line join=round]
    %------------------------------------------------
    % Vertices of tetrahedron
    %------------------------------------------------
    \coordinate (T) at (0,1.25);      % top
    \coordinate (R) at (1.35,-0.05);  % right
    \coordinate (L) at (-1.35,-0.05); % left
    \coordinate (B) at (-0.25,-1.15); % bottom
    % Center
    \coordinate (O) at (0,0.22);
    %------------------------------------------------
    % Soft face shading
    %------------------------------------------------
    \fill[blue!8]
        (T)--(R)--(B)--(L)--cycle;
    %------------------------------------------------
    % Dashed tetrahedron edges
    %------------------------------------------------
    \draw[dashed, gray!70]
        (T)--(R)--(B)--(L)--cycle;
    \draw[dashed, gray!70]
        (T)--(L);
    \draw[dashed, gray!70]
        (R)--(L);
    %------------------------------------------------
    % Vectors
    %-----------------------------------------------
    \draw[->, very thick, blue]
        (O)--(T);
    \draw[->, very thick, red]
        (O)--(R);
    \draw[->, very thick, green!70!black]
        (O)--(B);
    \draw[->, very thick, violet]
        (O)--(L);
    %------------------------------------------------
    % Central point
    %------------------------------------------------
    \filldraw[black] (O) circle (1.2pt);
    %------------------------------------------------
    % Vertex dots
    %------------------------------------------------
    \filldraw[blue]              (T) circle (1.2pt);
    \filldraw[red]               (R) circle (1.2pt);
    \filldraw[green!70!black]    (B) circle (1.2pt);
    \filldraw[violet]            (L) circle (1.2pt);
    %------------------------------------------------
    % Labels
    %------------------------------------------------
    \node[blue, above] at (T) {$v_1$};
    \node[red, right] at (R) {$v_2$};
    \node[green!70!black, below] at (B) {$v_3$};
    \node[violet, left] at (L) {$v_4$};
\end{tikzpicture}
\caption{Overlapping vectors}
\end{subfigure}
% % -------- Subfigure 2 --------
\begin{subfigure}{0.45\textwidth}
\centering
\begin{tikzpicture}[scale=0.3]
% -------------------------
% Parameters
% -------------------------
\def\L{12}        % side length of the large square
\def\a{3}         % side length of each small square
\def\N{20}        % number of small squares
% -------------------------
% Large square
% -------------------------
\draw (0,0) rectangle (\L,\L);
% -------------------------
% Randomly placed overlapping squares
% -------------------------
\foreach \i in {1,...,\N} {
    % Random position, chosen so the small square remains inside
    \pgfmathsetmacro{\x}{rnd*(\L-\a)}
    \pgfmathsetmacro{\y}{rnd*(\L-\a)}
    % Randomly choose one of the two diagonal patterns
    \pgfmathtruncatemacro{\pattern}{random(0,1)}
    \ifnum\pattern=0
        \filldraw[
            draw=black,
            thick,
            pattern=north east lines,
            pattern color=babyblue
        ]
        (\x,\y) rectangle ++(\a,\a);
    \else
        \filldraw[
            draw=black,
            thick,
            pattern=north west lines,
            pattern color=softorange
        ]
        (\x,\y) rectangle ++(\a,\a);
    \fi
}
\end{tikzpicture}
\caption{Overlapping qubits}
\end{subfigure}
\caption{An illustrative analogy. (a) The vector space ${\mathbb R}^n$ can fit up to $n$ mutually orthogonal vectors, but if we allow them to overlap we can fit more than $n$. (b) A Hilbert space with $\dim{\cal H}=2^n$ can fit up to $n$ independent qubits, but if we allow them to overlap we can fit more than $n$.}
\end{figure}

The intuitive reason for this lossy exponential compression comes from the Johnson-Lindenstrauss lemma~\cite{JohnsonLindenstrauss}, where one can identify $N=O(e^{n\epsilon^2})$ approximately orthogonal vectors with overlap of $O(\epsilon)$ in an $n$-dimensional vector space. 
By mapping these approximately orthogonal vectors onto Majorana fermions, one for each generator of the Clifford algebra, then this produces a set of $N$ ``overlapping fermions'' that mutually anti-commute up to $O(\epsilon)$ corrections. Finally, employing a fermion-to-qubit map, one can group and define overlapping Pauli operators that satisfy the algebraic and commutation relations in Eq.~\ref{eq:overlap}.

\subsection{Overlapping fermions}
\label{sec:overferm}
We now show how overlapping Majorana fermions can be constructed.
This is a slight modification of the construction in~\cite{OQ} that serves a dual purpose.
Not only does it enrich the variety of operators we can use in overlapping frameworks, but they are also more convenient for calculations.\\

Let $Cl(2n)=\langle \{C_i\}\rangle$  denote a $2^n$-dimensional (Hermitian) representation of the Euclidean Clifford algebra, 
\begin{equation}
    \{C_i,C_j\}=2\delta_{ij}\mathbb{I}_{2^n\times 2^n}.
\end{equation}
The generators $\{C_i\}$ are what one would use to describe a system of $2n$ Majorana fermions. 
The goal is to create $2N$ overlapping fermionic operators $\{\Psi_I\}$ that are Hermitian, approximately anticommute and square to the identity using the $2n$ fundamental fermions $\{C_i\}$. {Unless specified, throughout this paper capital indices will be ranging from $1$ to $2N$ and lowercase indices will be ranging from $1$ to $2n$, where $n<N$.}\\

We begin by invoking the Johnson-Lindenstrauss lemma,  which guarantees that we can find $2N$ approximately orthogonal unit vectors in ${\mathbb R}^{2n}$, where $N={\cal O}(e^{n\epsilon^2})$.
We gather these vectors as columns in the $2n\times 2N$ matrix $V$, or in other words:
\begin{align}
    V_{iI}&:=(\vec{v}^{(I)})_i\\
    V_{iI}V_{iJ}&=\vec{v}^{(I)}\cdot \vec{v}^{(J)}:=\delta_{IJ}+\epsilon R_{IJ}:=G_{IJ},\label{eq:vecs}
\end{align}
where $R_{II}=0$ and $R_{IJ}\sim O(1)$. In practice, the easiest way to construct such a matrix $V$ (which is not an isometry) is to start from an orthogonal matrix, randomly project out some rows and renormalize the columns so that they have unit norm.
In principle, one can add non-trivial structure by a careful construction of the $G$ matrix, however here we will focus on the structureless case. 
Therefore, we can now construct
\begin{align}
    \Psi_I=V_{iI}C_i.
\end{align}
One can easily verify given~\eqref{eq:vecs} that we are justified in calling the operators $\{\Psi_I\}$ overlapping fermions, since they are Hermitian, approximately anti-commute and square to the identity.
Thus, we have
\begin{equation}
    \{\Psi_I,\Psi_J\}=2\delta_{IJ}{\mathbb I}_{2^n\times 2^n}+2\epsilon R_{IJ}{\mathbb I}_{2^n\times2^n}.\label{eq:oferms}
\end{equation}

It is important to note that in this construction the overlapping operators are not acting on the same Hilbert space as their $2^N$-dimensional non-overlapping counterparts $C_I$. 
A useful modification that connects well with the non-overlapping description is to instead use a sub-algebra of a $2^N$-dimensional representation of the Euclidean Clifford algebra. 
More formally, let $\{C_I\}$ denote a $2^N$-dimensional representation of the Euclidean Clifford algebra:
\begin{equation}
    \{C_I,C_J\}=2\delta_{IJ}{\mathbb I}_{2^N\times 2^N}\label{eq:fullnonoverlapping}
\end{equation}
Let $A_{iJ}$ be a real $2n\times 2N$ matrix satisfying
\begin{equation}
    A_{iK}A_{jK}=\delta_{ij} .
\end{equation}
Without loss of generality (simply a matter of representation) we will be considering:
\begin{equation}
    A_{iJ}:=\delta_{iJ}=\begin{pmatrix}
        {\mathbb I}_{2n\times 2n}& \mathbf{0}_{ 2n\times 2(N-n)}
    \end{pmatrix}
\end{equation}
Then, the operators
\begin{equation}
    F_i:=A_{iJ}C_J=C_i\otimes{\mathbb I}_{2^{N-n}\times 2^{N-n}}\label{2.10}
\end{equation}
form a subalgebra and consequently, the operators
\begin{equation}
    {\tilde F}_I:=V_{jI} F_j
\end{equation}
satisfy
\begin{equation}
    \{{\tilde F}_I,{\tilde F}_J\}=2\delta_{IJ}{\mathbb I}_{2^N\times 2^N}+2\epsilon R_{IJ}{\mathbb I}_{2^N\times 2^N}. \label{eq:fulloverlapping}
\end{equation}
This defines overlapping fermions acting on the larger Hilbert space on which the $C_I$ act. The dimension of the subspace spanned by ${\tilde F}_I$ is $2^n$, matching with the dimension of the Hilbert space on which the $\Psi_I$ act.

We emphasize that this is by no means the most general toy model one can have.
There are many ways to construct overlapping fermions with vastly different overlap structures. 
While this particular choice using the Johnson-Lindenstrauss lemma and random projections is the most convenient for calculations in Sec.~\ref{sec:Page curve}, physically relevant models that can incorporate spatial locality, causal structure, and reproduce low-point expectation values will require more refined embedding maps for the overlapping vectors. 

\subsection{Overlapping qubits}
\label{sec:overlappingqubitsconstruction}
We now review how overlapping qubits are built from overlapping fermions in CRSV. The goal is to identify a set of Pauli operators $X,Y,Z$ for each qubit using fermionic operators. 
These operators should have all the desired properties locally --- each operator squares to the identity and anticommutes with other Pauli operators on the same qubit (c.f. equation \ref{eq:overlap}). However, operators belonging to distinct sets can have a small non-vanishing commutator.

To account for the extra structure, we prepare $6N$ approximately orthogonal vectors living in ${\mathbb R}^{2n}$ to define $N$ overlapping qubits. Once again, the existence of these vectors is guaranteed by the Johnson-Lindenstrauss lemma and for any two vectors $\ket{v_I}$ and $\ket{v_J}$ among them, we have
\begin{equation}
    \braket{v_I}{v_J}=\delta_{IJ}+\epsilon R_{IJ}\quad,\quad \epsilon \ll 1\quad,\quad R_{II}=0,
\end{equation}
and in this equation (and only this equation) the values for the capital latin indices range in $[1,6N]$.
Now, the vectors are separated in triplets and within each triplet we construct orthonormal bases $\{\ket{e}_I,\ket{f}_I,\ket{g}_I\}$. Therefore, for $s,t\in \{e,f,g\}$ we have:
\begin{equation}
    \braket{s_I}{t_J}=\delta_{IJ}\delta_{st}+\epsilon W^{st}_{IJ}\quad,\quad \epsilon \ll 1 \quad,\quad W^{st}_{II}=0\label{2.14}
\end{equation}
The notation we have chosen here is admittedly lax, but emphasizes how keeping track of the Gram-Schmidt procedure in terms of the original matrix $V_{iJ}$ is tedious. The important observation in the above formula is that vectors chosen from different triplets (which as we will see correspond to different qubits) will have an inner product of ${\cal O}(\epsilon)$.\\

We then construct three families of overlapping fermions as we showed earlier using a $2^n$ dimensional representation of the Euclidean Clifford algebra $C_i$:
\begin{equation}
    E_I=e_{jI}C_j\quad,\quad F_I=f_{jI}C_j\quad,\quad G_I=g_{jI}C_j,\label{2.15}
\end{equation}
where analogously to the previous construction we have organized all of the $\ket{e_I}$ vectors as columns of the $e_{jI}$ matrix and similarly for $f$ and $g$. One can easily verify that these matrices satisfy (again for $\{S,T\}\in\{E,F,G\}$):
\begin{equation}
    \{S_I,T_J\}=2(\delta_{IJ}\delta_{ST}+\epsilon W_{IJ}^{st}){\mathbb I}_{2^n\times 2^n}
\end{equation}
To finish the construction of the desired operators, we have the following Pauli-like operators (for clarity we will also show $Y$), where we need to suppress the summation convention.
\begin{align}
    X_I&=iE_IF_I\\
    Y_I&=iF_IG_I\\
    Z_I&=iE_IG_I
\end{align}
Indeed, these operators are Hermitian and for a given $I$ square to the identity, anticommute, and satisfy:
\begin{equation}
    [P_k^{(I)},P_l^{(I)}]=2i\varepsilon_{klm}P_m^{(I)}
\end{equation}
in the definitions of section~\ref{sec:overqubdisc}.

Following the overlap of the vectors, the commutator for Pauli-like operators acting on different overlapping qubits is non-zero, which indicates the fact that the qubits are not independent. Naturally, there could be a multitude of different constructions of overlapping qubits, either through a different fermion-to-qubit mapping or entirely independent from overlapping fermions. We use the one above because this is,  to the best of our knowledge, the only explicit overlapping qubit construction available. It is not unreasonable to expect that a better way to construct these overlapping qubits exists, but for the time being this is the construction that we will be using in the subsequent sections.
\section{Application to black hole evaporation}
\label{sec:appbh}

Having motivated the study of overlapping qubits in quantum gravity, we would like to see how they could be used to describe black hole evaporation. What we currently have is a simplistic toy model of overlapping qubits that manages to capture some conceptual features of black hole evaporation. Furthermore, within this model a notion of complementarity arises naturally and in a way that avoids the violation of the monogamy of entanglement. Our hope is that once they are better understood, overlapping qubits can be refined into a framework for studying black hole evaporation and maybe even quantum gravity. In this section, we will be introducing this toy model identifying which features of black hole evaporation it reproduces on a conceptual level and which require a better understanding.\\

\subsection{Roadmap from overlapping qubits to evaporating black holes}
\label{sec:roadmap}
We will begin this section by defining three ways of identifying degrees of freedom ---fundamental, overlapping/quasi-local, and effective --- relevant for our discussion.

As a starting point, let the Hilbert space of a black hole in quantum gravity be ${\cal H}_F$, with $\dim{\cal H}_F=2^n = e^{S_{\rm BH}}$.
Here $S_{\rm BH}$ is the entropy of the black hole. One can define an algebra of observables on these $n$ qubits that corresponds to the \textit{fundamental} degrees of freedom of quantum gravity. We further posit that the formation and evaporation of the black hole can be fully captured by a unitary process over ${\cal H}_F$ in quantum gravity.
At each time $t$, the fundamental description assigns a single, definite state $|\psi(t)\rangle_F$
 to the black hole.
This is consistent with the expectation from \cite{Donnelly_2016,Donnelly_20162,Marolf_2009,Jacobson_2019} where a fully unitary theory could be construed as living in the asymptopia of the spacetime containing an evaporating black hole while restricting to the space of states with  gauge invariant observables consistent with a black hole of entropy $S_{\rm BH}$.

In addition to these fundamental degrees of freedom we will also introduce $N(t)$ overlapping qubits which mimic the quasi-local degrees of freedom associated with the black hole interior and radiation. 
The time dependence comes from the changes in the interior and radiation degrees of freedom as the black hole evaporates. We will refer to these generally as the quasi-local degrees of freedom, which live in the fundamental Hilbert space as almost factorizable Pauli algebras. These would be the degrees of freedom that a (bulk local) observer not unlike ourselves  actually has access to. In particular, for beings that are resource and precision-limited so that they are unable to verify the existence of overlaps efficiently, the overlapping qubits will appear as real as non-overlapping qubits.

Finally, we can lift the $N(t)$ sets of Pauli operators to operators over a much larger $2^N$ dimensional Hilbert space ${\cal H}_{\text{eff}}$ and drop the small overlaps entirely. We will refer to this as the \textit{effective} description, where the local degrees of freedom are $N$ non-overlapping qubits. One should think of these playing the role of degrees of freedom in a local effective field theory that lives in a factorizable Hilbert space with respect to spatial regions.

Note that we have used terms like quantum gravity and effective field theory as motivating points of reference. In the current work, however, we will not focus in anyway on spatial locality or causal structure that is present in either theory for the sake of simplicity. We envision such structure will be embedded in the way that overlapping qubits are identified in the fundamental Hilbert space, e.g. through more structured linear maps $V$ and/or a wider variety of fermion-to-qubit maps. We refer to the map that identifies overlapping degrees of freedom in $\mathcal{H}_F$ as the emergence map $\mathcal E(t)$, which depends on time\footnote{One might expect the construction of such an emergence map to come out of \emph{quantum mereology} programs such as \cite{Carroll:2020gme,Loizeau:2023vxd,Loizeau:2024rrb,OliverWavePacket} but applied to notions of approximate tensor factorizations like overlapping qubits.}. The complete unitary dynamics on the fundamental theory will likely be given by a UV-complete theory of QG. And we assume that the fundamental and effective dynamics will fix the evolution of $\mathcal{E}(t)$ in a more comprehensive theory built out of overlapping qubits. We will construct a simplified kinematic model based on expectations of evaporating black holes and study our system at different ``snapshots''. 
\begin{figure}[t]
    \centering
    \includegraphics[width=0.95\textwidth]{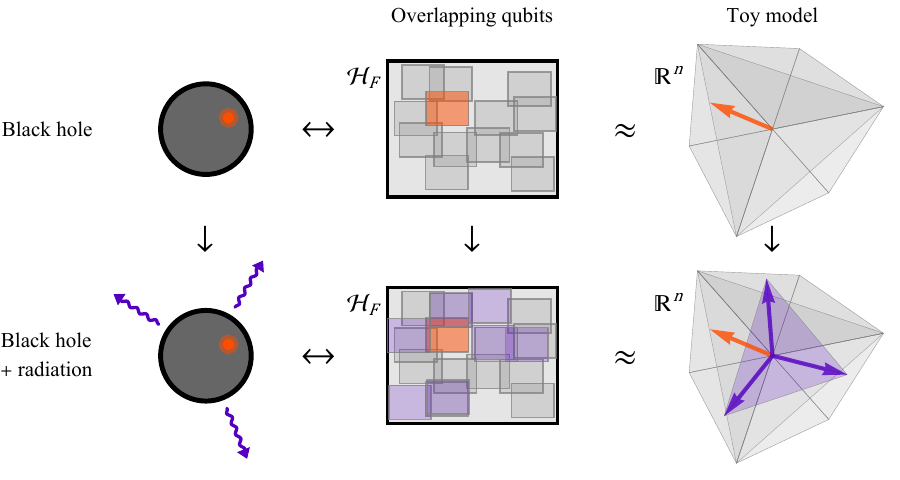}
    \caption{A visualization of the overlapping qubit (center) and random-vector toy model (right) introduced here.
    In our representation of black hole evaporation, we consider a fundamental Hilbert space ${\cal H}_F\neq {\cal H}_{BH}\otimes {\cal H}_R$ containing both local interior degrees of freedom (gray, with one highlighted in orange) and exterior radiation (blue), with both represented by overlapping qubits (overlapping boxes) and the latter forming an overcomplete basis around time $t \sim t_\text{Page}$.    
    Simplified in terms of random vectors in $\mathbb{R}^n$ that are approximately orthogonal, reconstruction of a given interior vector (orange) corresponds to building a span of exterior vectors (blue) that contains it. At large $n$, this requires a complicated basis representation, mimicking the difficulty of interior reconstruction.
    }
    \label{fig:OQBH}
\end{figure}

The remaining construction works as follows. We assign $N_I(t)$ overlapping qubits to the black hole interior and $N_R(t)$ to describe the exterior and radiation degrees of freedom. We assume that at $t=0$ the black hole just formed and has yet to begin to evaporate. Therefore, $\mathcal{E}(0)$ assigns $N_I(0)=N_0\geq n$ overlapping qubits to describe the black hole interior and $N_R(0)=0$ for the emitted quanta of radiation. For each unit of time $\delta t=1$, we identify two additional overlapping qubits in ${\cal H}_F$ --- one assigned to the interior (black) and the other to radiation (red) degrees of freedom (Fig~\ref{fig:OQBH}). They play the role of the operators associated with the ``Hawking pairs'' in the evaporation process.
By construction, the fundamental dynamics are assumed to be unitary and the individual overlapping degrees of freedom correspond to operators that act on different regions of space on a nice slice.

It is easy to see that since the overlapping qubits map to overcomplete bases spanning $L(\mathcal{H}_F)$, at ``Page time'' $t_p/\delta t\sim n$, enough of the black hole would have evaporated such that the interior algebra spanned by the $N_I(t_p)$ overlapping degrees of freedom can be fully reconstructed by the exterior operators. More precisely, for any interior operator $O_I$, there exist exterior operators $\{Q_J\}$ such that 
\begin{equation}
    O_I = \sum_J c_J^I Q_J
    \label{eqn:optrecon}
\end{equation}
and naturally, 

\begin{equation}
    O_I|\psi\rangle_F = \sum_J c_J^I Q_J|\psi\rangle_F.
\end{equation}
As an aside, notice that the second equation may be satisfied in a state-dependent manner before $t_p$.
In this sense, the exterior can access the same information as the interior as any exterior observer can manipulate the interior at will (and vice versa) by acting some interior operator on $|\psi_F\rangle$, so long as the dictionary (\ref{eqn:optrecon}) is known. In the mean time, for sufficiently large $S_{\rm BH}$, it is clear that the interior and exterior qubits have $\epsilon\ll 1$ overlap, hence giving these observers the illusion of locality as long as they only probe the commutator of low weight Pauli operators. While state-dependent reconstruction is of separate interest, we will focus on state-independent operator level reconstruction for this work.

Although the above operator reconstruction is a simple change of basis, it is practically difficult in a universe with a large black hole. In order for interior observers to influence the exterior (or vice versa) acausally, they generically have to construct an operator that is a coherent superposition of many operators in the interior or exterior which will likely require multi-qubit rotations that are of order entropy of the black hole. By contrast, one would expect causal signaling (from the exterior to interior) to be easy where one only has to be able to manipulate a much smaller number of overlapping qubits.

This overlapping qubit model provides a concrete instantiation of black hole complementarity. The interior and exterior degrees of freedom both describe the same system which admits a more succinct description in the fundamental Hilbert space. It is clear that there is no cloning of information --- both the interior and exterior are distinct representations of the same fundamental algebra past Page time and they are related to each other via a non-trivial transformation. However, due to the approximate factorization of ${\cal H}_F\neq {\cal H}_{BH}\otimes {\cal H}_{R}$, both the interior and the exterior would appear to have access to distinct qubits that are ``local'' to their respective domains if the small overlap is overlooked. 
Careless observers that ignore these overlaps, e.g., they work in the effective description, would indeed be under the illusion that information has been cloned because the exterior observers could proceed to ``decode'' the Hawking radiation. Notably, if they actually attempt this decoding, which is presumably a computationally costly operation (but not in the same class as the ones typically seen in complexity-protected setups like~\cite{Harlow:2013tf}) for someone living in the same universe, then the procedure will alter the state of the interior.
Indeed, this is similar to some ${\cal A}={\cal R}_{\cal B}$ proposals~\cite{Bousso_2013,Harlow:2013tf,Papadodimas_2013,susskind2013blackholecomplementarityharlowhayden} in spirit though quite distinct in realization. In particular, in this toy model the identification of the interior and the radiation Hilbert spaces is explicit, with the two seemingly independent subsystems being secretly two equivalent representations in the same fundamental Hilbert space.
\subsection{Complexity of operator reconstruction}

We will now explore the progression of complexity in operator reconstruction in an even simpler toy model for the encoding of interior and exterior degrees of freedom of an evaporating black hole into a single fundamental Hilbert space.
In this toy model, the Hilbert space is replaced by a simple vector space $\mathbb{R}^n$ of $n$ ``fundamental'' degrees of freedom, in which both exterior and interior modes are represented by vectors (the right side of Fig.~\ref{fig:OQBH}). 
Both from an interior and exterior perspective, the apparent growth of degrees of freedom --- local modes on an expanding spatial slice and continuing emission of Hawking quanta, respectively --- is modeled by adding random vectors (or rays, as we disregard the norm) in $\mathbb{R}^n$ (practically, by choosing some basis vector $\vec{v}_0$ and taking $O\, \vec{v}_0$ with $O$ sampled over the Haar measure for $SO(n)$). 

Choosing the perspective of the exterior observer, we have a set $\{w_k\}$ of random vectors $w_k \in \mathbb{R}^n$ that is growing linearly with time, and a single vector $v \in \mathbb{R}^n$ (also chosen randomly) as our ``interior mode'' that we wish to reconstruct.
Using linear combinations of the $w_k$, we try to get as close as possible to $v$. How close we can get (the \emph{fidelity} $f$ of our reconstruction) depends on the subspace spanned by the $\{w_k\}$: We simply apply a Gram–Schmidt process to $\{w_k\}$ to form an orthogonal basis $\{\tilde{w}_n\}$, and then compute $f=\left(\sum_k |v \cdot w_k|^2\right)^{1/2}$. We then expect $f \approx 1$ after $t=n$, i.e., after $\{w_k\}$ has grown to $n$ random vectors.
As $t>n$, the approximate basis becomes overcomplete: The entire $\mathbb{R}^n$ can be approximately reconstructed from any sublist of $n$ random vectors, with $v$ written as a linear combination of $n$ of the $w_k$ with a non-negligible prefactor. At $t>n$, we can therefore choose an orthonormal basis in different ways. To get the ``optimal'' one with the fewest high-weight terms, we apply a \emph{weighted} Gram–Schmidt process instead: 
From the full list, we choose the vector $w_\text{opt}^{(1)} \in \{w_k\}$ that maximizes $|v \cdot w_\text{opt}^{(1)}|^2$. We then build a new list 
\begin{equation}
    {w_n^\prime} = \left\{ \frac{w - w \cdot w_\text{opt}^{(1)}}{\Vert w - w \cdot w_\text{opt}^{(1)} \Vert } \;|\; w \in \{w_n\} \wedge w \neq w_\text{opt}^{(1)} \right\} \, ,
\end{equation}
which is the standard Gram-Schmidt orthogonalization step. Repeating this procedure iteratively produces a list of orthonormal basis vectors $o_\text{opt}^{(k)}$, terminating after $n$ steps (when all the remaining vectors in the list have been reduced to the zero vector).
We can then use a basis expansion
\begin{equation}
    v \approx \sum_k a_k \, w_\text{opt}^{(k)} \ ,
\end{equation}
which becomes an equality when $\{w_k\}$ reaches $n$ elements and where the basis coefficients $a_k = v \cdot w_\text{opt}^{(k)}$ decrease with $k$. 
How many basis vectors do we need to get a good approximation of $v$? This is quantified by the \emph{Krylov entropy}
\begin{equation}
    S_K = - \sum_k a_k^2 \log a_k^2 \ .
\end{equation}
Fig.~\ref{fig:krylov-entropy} shows $S_K$ for a list $\{w_k\}$ with $t$ elements (successively added to the previous list): At $t < n$, the weighted Gram-Schmidt process fails to fully reconstruct $\mathbb{R}^n$, so $\sum_k |a_k|^2 <1$ and the Krylov entropy is suppressed. It peaks around $t=n$, but then decreases: The overcomplete and non-orthogonal basis $\{w_k\}$ allows for an increasingly compact orthonormal basis to be constructed close to $v$, with the probability of any $w_k$ being close to $v$ increasing with the total number of elements in the list. 
However, this decay in Krylov entropy is slow: The higher the dimensionality $n$ of the vector space, the more unlikely it is to find any vector whose angle $\alpha$ with $v$ deviates much from $\pi/2$, as the probability of such an angle will decay with $\propto \sin^{n-2} \alpha$.
As an alternative notion of complexity of reconstruction in our toy model, we can also simply count the number of basis vectors after the weighted Gram-Schmidt process with a coefficient $a_k^2 > \epsilon$ above a truncation value $\epsilon$.
In Fig.~\ref{fig:krylov-entropy}, we see that this counting yields a similar decay to the Krylov entropy.

Though these measures of complexity differ from conventional notions of quantum circuit complexity for operators acting on a Hilbert space, we expect the following intuition from our toy model to hold more generally:
As discussed above, although the modes associated with exterior Hawking radiation are described by overlapping qubits, the full algebraic access to the interior is only possible after the Page time: At that point, a non-overlapping set of basis operators (the quantum analogue of our Gram-Schmidt orthonormalized basis) of the dimension of the fundamental Hilbert space can be formed with very high probability. 
However, the circuit complexity of recovering a single semiclassical mode from the interior at that point is still vast, corresponding to a superposition of many basis vectors in our toy model; consistent with considerations of black holes as efficient scramblers \cite{Kim:2020cds}, this would prevent a computationally bounded exterior observer from reconstructing the interior.
Our toy model suggests that after the Page time, the finiteness of the fundamental Hilbert space slowly reduces this complexity: the increasingly overcomplete basis formed by exterior modes allows an exterior observer to reconstruct an interior mode using shallower circuits.

We now comment on a few more speculative observations in this model. Even though the interior and exterior are encoded in the same Hilbert space, both are effectively separated by a complexity barrier:
a computationally bounded exterior observer acting on semiclassical modes cannot access those available to an interior observer, and vice-versa. However, as time progresses long after the Page time, this complexity horizon slowly retreats, gradually revealing the interior.
One can also invert this toy model to describe the experience of the \emph{interior} observer, where a similar situation takes place: extremal spatial slices through the interior that are anchored to an external observer at stationary distance from the horizon grow linearly with the latter's time \cite{Susskind:2014moa,Christodoulou:2014yia}.
As these slices avoid the singularity itself, an interior observer would thus appear to be able to access a set of EFT operators that is highly overcomplete in the fundamental Hilbert space, possibly allowing for low-complexity operations to appear on the exterior; however, in a practical physical setting with causal structure, such an interior observer would not actually be capable of accessing a large set of these overlapping modes, precluding them from imprinting any messages into the Hawking radiation that any exterior observer could easily read.

\begin{figure}
    \centering
    \includegraphics[width=0.99\linewidth]{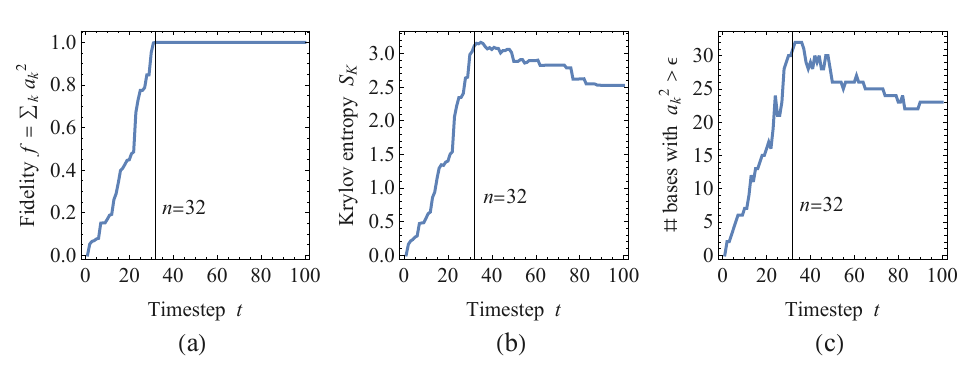}
    \caption{Representation of a random vector $v$ in a random-vector basis $\{w_k\}$ with $t$ elements under weighted Gram-Schmidt orthonormalization, for a vector space $\mathbb{R}^n$ of $n=32$ elements. 
    (a) Fidelity $f=\sum_{k=1}^t a_k^2$ of reconstruction of $v$ in the orthonormalized basis with basis elements $a_k$. 
    (b) Krylov entropy $S_K$ of  basis elements.
    (c) Number of basis vectors with $a_k^2 > \epsilon$ for a cutoff $\epsilon=10^{-3}$.
    }
    \label{fig:krylov-entropy}
\end{figure}

\section{Entropy and Page curve in overlapping fermions}
\label{sec:ferm_pagecurve}

Now that we have a conceptual framework for overlapping qubits, we proceed to check whether it can reproduce the Page curve, which is a key metric expected from black hole evaporation. Additionally we would like to understand the origin of Hawking's prediction in the semiclassical description~\cite{PhysRevD.13.191} and see operator reconstruction more explicitly. However, we immediately run into two problems.

The first one is that we are not technically equipped to perform direct analytical calculations using the original overlapping qubit mapping in~\cite{OQ}. As we saw in~\eqref{2.14} keeping track of the Gram-Schmidt procedure for all triplets is cumbersome and the expressions quickly become messy. Perhaps more importantly, equation~\eqref{2.15} breaks the summation convention explicitly. However, all these problems do not arise at the level of overlapping fermions as we saw in~\ref{sec:overferm}. As such, we will adapt the lessons we learned from the conceptual framework of overlapping qubits to a (Majorana) fermionic toy model more suitable for calculations in Section~\ref{sec:toymodels}. Although the approximate commutators are replaced by approximate anti-commutators, we believe that this gap, which can be overcome in future work, is a technical one and not a physical one. We will provide some justifications for this belief in the next section. In this section, however, one can avoid this subtlety by restricting the analysis to an even-parity fermionic state\footnote{By even parity we mean $P\rho P=\rho$ for $P$ being the fermionic parity operator.} and even-parity observables. 

The second problem is that it is not clear how one should define entropy in a way similar to the von Neumann entropies used in local effective field theories or factorizable systems when the notion of subsystem independence becomes relaxed. We will offer different definitions of entropy in sections~\ref{sec:funddesc},~\ref{sec:quasi-local} and discuss how they are expected to behave during the black hole evaporation process. We will then corroborate these expectations with numerics in section~\ref{sec:Page curve}.

\subsection{Fermionic toy model}
\label{sec:toymodels}
We will now work with a specific toy model of overlapping Majorana fermions. 
Let ${\cal F}=\{C_i\}$ be a $2^n$-dimensional representation of the Euclidean Clifford algebra. The operators in ${\cal F}$ will be identified with the $2n$ fundamental degrees of freedom of quantum gravity. We also saw that we need a quasi-local description of $2N(t)$ overlapping Majorana fermions ${\cal T}=\{\Psi_I\}$ to describe the interior and the radiation. To do that, we divide ${\cal T}$ in complementary sets, ${\cal I}=\{\Psi^{(\text{in})}\}$ for the interior and ${\cal R}=\{\Psi^{(\text{rad})}\}$ for the radiation, with $2N_I(t)$ and $2N_R(t)$ overlapping fermions in each set respectively. We are starting with all the overlapping fermions describing the interior at $t=0$ with $N(t)\geq n$. This partition changes at every time step $\delta t$ where {two new overlapping fermions are} added to each of ${\cal I}$ and ${\cal R}$ to model the quasi-local degrees of freedom.  Lastly, the effective description is obtained by dropping the $\epsilon$ terms in~\ref{eq:fulloverlapping}, which corresponds to lifting the overlapping fermions to a $2^{N(t)}$-dimensional representation of $2N(t)$ standard non-overlapping Majorana fermions acting on ${\cal H}_{\text{eff}}$, with $\dim{\cal H}_{\text{eff}}=2^{N(t)}$. For notational convenience, the time dependence will remain implicit from here on. Notice the doubling in $n\rightarrow 2n=n'$ and $N\rightarrow 2N=N'$ for the fermionic modes compared to qubits. For clarity, it can be helpful to think in terms of $n'$ and $N'$, which now play the role of the ``dimension'' for the fundamental and effective spaces.\\

Using the constructions in Sec.~\ref{sec:overferm},  it is now possible to see the operator reconstruction outlined in Sec.~\ref{sec:roadmap} explicitly. {We will do so explicitly for the generators of the Clifford algebra (i.e. the weight 1 operators) and similarly all higher weight operators will admit an expansion, since they are obtained by multiplication of the generators.}\\
We will separate the columns of the matrix containing the approximately orthogonal vectors and treat them as two matrices $\Omega$ and $W$:
\begin{align}
    \Psi_I^{in}=\Omega_{jI}C_j\quad,\quad \Omega_{kI}\Omega_{kJ}=\delta_{IJ}+\epsilon R_{IJ}\\ \Psi_I^{ext}=W_{jI}C_j\quad,\quad W_{kI}W_{kJ}=\delta_{IJ}+\epsilon Q_{IJ}
\end{align}
If the set of column vectors in $\Omega$ and $W$ form an overcomplete basis, then the rank of the $\Omega$ and $W$ matrices is equal to $n'$. In our construction, this is true for $\Omega$ at all times, but only true for $W$ after $t\sim n'$. For the anticommutator between interior and radiation we also have
\begin{equation}
    \{\Psi^{\text{(in)}}_{I},\Psi^{\text{(ext)}}_J\}=2\Omega_{kI}W_{kJ}{\mathbb I}=2\epsilon S_{IJ}{\mathbb I},
\end{equation}
indicating the small microcausality violation we mentioned in the introduction. Then, after Page time, when the overlapping fermions allocated to the radiation are sufficiently many to form an overcomplete basis of the fundamental algebra we can write any interior operator as an exterior operator and vice versa by:
\begin{align}
    \Psi_I^{\text{(in)}}=\Omega_{jI}(WW^T)^{-1}_{jk}W_{kJ}\Psi^{(\text{ext})}_J\nonumber\\
    \Psi_I^{\text{(ext)}}=W_{jI}(\Omega \Omega^T)^{-1}_{jk}\Omega_{kJ}\Psi^{\text{(in)}}_J
\end{align}
Once again, the lack of factorization means that information is not cloned in any way and every interior operator is eventually reconstructible in the exterior. Notably, this reconstruction is not possible for all operators before Page time, since the Gram matrices would not be full rank. This could be interpreted as information escaping the black hole in this fermionic toy model, but of course, we also want to study the entropy of the radiation and obtain the Page curve. Having introduced the fermionic toy model, we are ready to introduce three distinct definitions of entropy, corresponding to the fundamental, quasi-local and effective descriptions respectively. Each will offer a different perspective on the system.

The fundamental entropy can be thought of as an analog of the algebraic entropy in the UV theory of quantum gravity where a semi-classical description is lacking. For example, in AdS/CFT, one can think of the fundamental entropy as an algebraic entropy computed in the boundary theory where there need not be a (spatially) local description. The quasi-local and effective entropies are intended to simulate entropies that are derivable by an observer that work with a local-looking semiclassical description, e.g., a bulk description of quantum field theory on a background or a field theory coupled to semiclassical gravity. Of course, these analogies are not intended to be precise correspondences, but only conceptual guides that could help us better orient the motivations behind these different entropies.
\\

\subsection{Entropy in the fundamental description}
\label{sec:funddesc}
The cleanest entropy definition is in the fundamental description. By assumption, our system, which is said to be in the \emph{fundamental state} $\rho_F\in L(\mathcal{H}_F)$, is always well-defined in this description. One can also expand the state in terms of the Clifford algebra basis over the fundamental Hilbert space ${\cal H}_F$
\begin{align}
    \rho_F&=\frac{1}{d}\sum_{k=0}^{2n} i^{\frac{k(k-1)}{2}} x^{F}_{[i_1i_2\cdots i_k]}C_{[i_1}C_{i_2}\cdots C_{i_k]}\quad,\quad d=2^n\\
    \mathrm{where}\quad x^F_{i_1i_2\cdots i_k}&=i^{\frac{k(k-1)}{2}}\Tr(\rho_{F}C_{i_1}C_{i_2}\cdots C_{i_k}).
\end{align}
This state should be thought of as describing the black hole-radiation system in a UV theory of quantum gravity, e.g. a holographic CFT. In the above definition and the definitions to follow, the expansion coefficients are left arbitrary but they should satisfy constraints to ensure that the state is positive semi-definite (and pure if considering a pure state).

To define the entropy of the ``radiation'', it is natural to consider the algebraic entropy~\cite{Casini_2014,Harlow_2017}. Recall that given some von Neumann algebra $M$, for any state $\rho\in L(\mathcal{H}_F)$, one can always find a unique state $\rho_M \in M$ in the algebra such that
    \begin{equation}
      \Tr(\rho_M O_M)=\Tr(\rho O_M),\quad \forall O_M\in M.    
    \end{equation} 
    The algebra $M$ induces a canonical decomposition of the Hilbert space 
    \begin{equation}
        {\cal H}=\oplus_a {\cal H}_{{\cal A}_a}\otimes{\cal H}_{\bar{{\cal A}}_a}
    \end{equation}
    and the state $\rho_M$ admits the following canonical decomposition:
    \begin{equation}
        \rho_M=\oplus_a p_a\rho_{{\cal A}_a}\otimes \frac{{\mathbb I}_{\bar{{\cal A}}_a}}{|\bar{{\cal A}}_a|}
    \end{equation}
    where $p_a\geq 0$ are constants such that $\sum_a p_a=1$. 
    
    Then, the algebraic entropy of the state $\rho$ over the von Neumann algebra $M$ is the generalization of the von Neumann entropy on a subsystem when the Hilbert space does not factorize and is given by:
    \begin{equation}
        S(\rho,M)=-\sum_{a}p_a\log p_a+ \sum_a p_a S(\rho_{{\cal A}_a}).
    \end{equation}

The natural subalgebra $M$ in our case is the one generated by the overlapping fermions that correspond to the radiation degrees of freedom. Consider the set of overlapping operators $\mathcal{R}=\{\Psi^{(rad)}_I = V_{iI}C_i, I=1,\dots,t \}$ which are identified through a random construction:
each effective mode $\Psi^{(rad)}_I$ is chosen as a random Bogoliubov transformation of the fundamental modes $C_i$; this is mathematically similar to the random vector toy model introduced above, as the symmetry group of fermionic parity-preserving Bogoliubov transformations (i.e., transformations between pure fermionic Gaussian states) is given by $SO(2n)$ \cite{Windt:2020tra}.

At time $t\lesssim 2n$, the set $\mathcal{R}$ then contains $t$ overlapping fermions that are built out of near-orthogonal column vectors $V_{iI}$. %\TK{Isn't the last sentence incomplete?}
To identify an exact subalgebra that is closed, a Gram-Schmidt process is first used to produce an operator basis  $\{{\Psi}^\perp_{I}={(V^\perp})_{iI}C_i\}$ consisting of $\sim t$ generating elements where the column vectors of $({V}^\perp)_{iI}$ are orthonormal. Since they are unitary and pairwise anticommute, we can use them to generate an exact Clifford subalgebra $M$ of the full fundamental algebra \footnote{One may worry that ${\Psi}^\perp_I$ will have support on all $2n$ fundamental Clifford generators even at $t<2n$, however, it can always be related to an order $t$ subset of $\{C_i\}$ using an orthogonal transformation $O_{ij}$, which corresponds to a Gaussian unitary rotation over the space of states. Hence they form a proper set of Clifford generators. }. 

Suppose $\rho_F$ is pure, it is not hard to see that the entropy $S(\rho_F, M)=0$ at $t=0$ and for $t\gtrsim 2n$ because for sufficiently long time, the operator basis in $\mathcal{R}$ is complete. Then for intermediate times, the entropy follows a Page-like curve where it first rises, peaks when $t\sim n$, and then descents. 

For Haar random $\rho_F$, an analytical estimate follows from \cite{Bianchi_2019}, where if $\langle S_{page}(t) \rangle$ denotes the average Page entropy at $t$, then on average
\begin{equation}
    \langle S(\rho_{\rm haar},M)\rangle  = \begin{cases}
         \langle S_{page}(t)\rangle  \quad & \mathrm{if~t~even}\\
         \langle S_{page}(t-1)\rangle + \psi(2d_M+1)-\psi(d_M+1) \quad & \mathrm{if~t~odd}
    \end{cases}
\end{equation}
where $d_M$ is the dimension of the subalgebra $M$ and the difference between the two digamma functions are limiting to $\psi(2d_M+1)-\psi(d_M+1)\xrightarrow{d_M\rightarrow \infty} \log(2)$.
This analytical result is consistent with numerics, where a plot for $2n=16$ is shown in Fig.~\ref{fig:haar_page_curve}.
\begin{figure}
    \centering
    \includegraphics[width=0.7\linewidth]{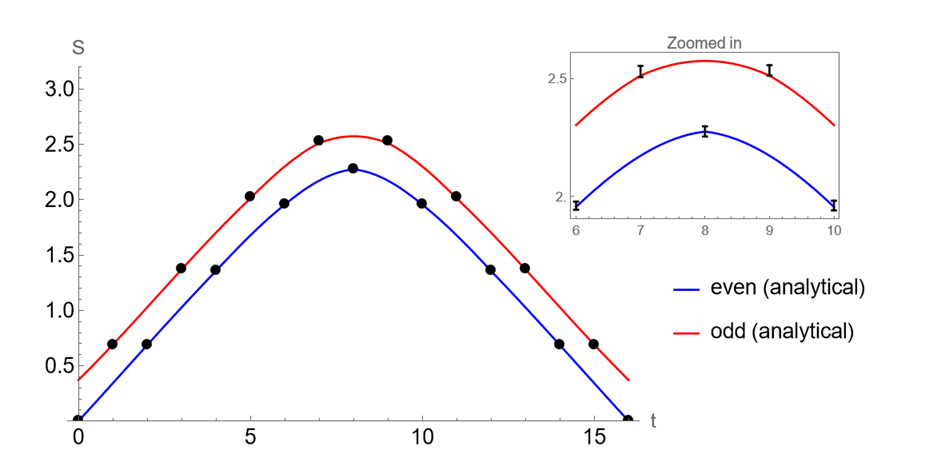}
    \caption{The algebraic entropy numerically averaged over $20$ Haar random states (black) is plotted the analytical results for $t$ odd (red) and even (blue) cases. Here $S$ denotes the algebraic entropy. A zoomed-in plot shows the error bars (standard deviations) that are too small to see in the full plot.}
    \label{fig:haar_page_curve}
\end{figure}

One might note a somewhat surprising feature that the curve returns to zero at  ``Page time'' --- when the exterior radiation can reconstruct the interior --- while the entropy peaks at half of the Page time. Although this mismatches our expectation from semi-classical physics, where the entropy peaks at Page time, the fundamental entropy is also supposed to correspond to a some UV theory of quantum gravity that has a rather non-local description of the system.

For the same reason, the recovery of a Page-like curve in the fundamental description is also not as informative because the truly non-trivial recovery of the Page curve should be from the semi-classical description where the description can be (quasi)-local.

\subsection{The (quasi-)local description}
\label{sec:quasi-local}
Things are more involved in the (quasi-)local description. Here, the observer is handed overlapping fermions 
without necessarily being told
that they overlap or that there is a mapping to a fundamental algebra of a smaller Hilbert space. At first sight, they may treat the fermions as independent and proceed from there.

Therefore, we motivate a different type of entropy construction from the following Gedanken experiment. Suppose the exterior observer is allowed to apply these overlapping operators in $\mathcal{R}$ to the state and can read out expectation values of any such operators any number of times. However, they are not given any information about the wavefunction of fundamental state, its algebra, or the mapping $V$. If they weren't informed that the fermions they can measure overlap, what would they conclude as the von Neumann entropy of the radiation degrees of freedom over time? 

One may think of this scenario as the case where an external observer is able to conduct the same experiment repeatedly and then ``measure'' these overlapping fermionic operators. Then they may reconstruct a state in $\mathcal{H}_{\rm eff}$ through quantum state tomography using these measurements while treating them as actual non-overlapping fermion measurements. Here by measurement, we simply mean that the observer can extract the operator expectation value to arbitrary degree of accuracy. We do not discuss, in this work, the deeper problem of measurement, such as resource-limited precision, observer degrees of freedom, or decoherence in the context of overlapping fermions. 
One may be tempted to define immediately the entropy of the ``radiation'' as the von Neumann entropy of this reconstructed state, but we would immediately run into problems as the state is not positive semidefinite in general. 

When confronted with this problem,  an observer may not question the overlapping nature of the fermions and insists that the larger effective Hilbert space is real. As a result, the observer may choose to ignore the negativity of the reconstructed state, identify the nearest physical state, and compute its entropy instead. We will call this the \emph{effective entropy} as the lack of overlap is most similar to the assumptions one would make in a local effective field theory. While it would certainly appear to be a strange and irresponsible choice if one is promised the immense power that can extract the expectation values exactly and can compute the entropy from it, this is not unreasonable in a more realistic experiment where noise and statistical fluctuations can also induce negativity  \cite{Smolin_2012}. 

A more careful observer on the other hand will conclude that something seriously wrong has happened to their assumption of the quantum mechanical model of the radiation. When equipped with a lot of computing and memory resource, we will see that they can learn that the state constructed is inconsistent with having a large Hilbert space with dimension $2^{|\mathcal R|}$ but instead a small Hilbert space of dimension $2^n$ when $|\mathcal R|>n$. Then, performing a suitable correction to the state, one can isolate the truly physical portion and compute its von Neumann entropy. We will refer to this as the \emph{quasi-local entropy}.

We claim that the effective entropy will be Hawking-like, where it grows as the black hole evaporates. On the other hand, the quasi-local entropy follows a Page curve.\\

\subsubsection{Pseudo-state reconstruction}
Let us now present this Gedanken experiment explicitly. Suppose the observers are measuring the overlapping fermions of~\eqref{eq:oferms} while thinking they have access to their non-overlapping counterparts. Then they would obtain the following pseudo-state, where $\tilde{\rho}$ is a Hermitian $2^{N_R}\times 2^{N_R}$ matrix with unit trace which is not necessarily positive semi-definite:
\begin{align}\label{state}
    {\tilde \rho}&=\frac{1}{D}\sum_{k=0}^{2{N_R}}i^{k(k-1)}\Tr(\rho_{\text{F}} \Psi_{[I_1}\Psi_{I_2}\cdots \Psi_{I_k]})C_{[I_1}C_{I_2}\cdots C_{I_k]}\nonumber\\
    &=\frac{1}{D}\sum_{k=0}^{2{N_R}}i^{\frac{k(k-1)}{2}}V^{\,\,\,[I_1}_{i_1}V^{\,\,\,I_2}_{i_2}\cdots V^{\,\,\,I_k]}_{i_k}x^F_{i_1i_2\ldots i_k}C_{I_1}C_{I_2}\cdots C_{I_k},
\end{align}
with 
\begin{align}
    x^F_{i_1i_2\cdots i_k}\equiv i^{\frac{k(k-1)}{2}}\Tr(\rho_{F}C_{i_1}C_{i_2}\cdots C_{i_k}).
\end{align}

Note that this time the antisymmetrization is more than a mathematical convenience to have an implicit summation notation. Since the fermionic operators no longer anticommute, a fermionic string (even with the corresponding $i$ prefactors) with a given ordering is generally not Hermitian. Furthermore, different orderings of the generators cease to be equivalent, since they can no longer be related to one another by anticommuting the Cliffords in the operators and expectation values accordingly. The antisymmetrization not only restores the Hermitianity of the operators in a way that is consistent with the non-overlapping description, but it can also be interpreted as averaging over all possible orderings with equal weight, which is a reasonable prescription if one wants to remain agnostic regarding the underlying ordering of the operators.

We can now notice that the pseudo-state ${\tilde \rho}$ has some characteristic structure by using the singular value decomposition (SVD) of $V$:
\begin{equation}
    V_{iI}=S_{ij}\Sigma_{jJ} O^T_{JI},
\end{equation}
where
\begin{equation}
    \Sigma_{iI}=\begin{pmatrix}
        \sqrt{(\Lambda_G)}_{2n\times 2n}& 0_{2n \times (2{N_R}-2n)}
    \end{pmatrix}\equiv (\sqrt{\Lambda}_G)_{ij}\delta_{jI}\quad,\quad G_{ij}:=V_{iK}V^T_{Kj}\label{eq:gmat}
\end{equation}
is the singular value matrix of $V$ and $\Lambda_G$ is the matrix containing the eigenvalues of $G$.
Let $U$ be the Gaussian unitary that generates the following orthogonal rotation among the Clifford generators in the large Hilbert space,
\begin{equation}
    UC_I U^\dagger = O_{IJ}C_J.
\end{equation}
 This allows us to study the pseudo-state ${\tilde \rho}_U\equiv U\tilde{\rho} U^\dagger$, which has the same spectrum as $\tilde{\rho}$ but can be expressed more compactly. 
Defining
\begin{equation}
    z_{i_1i_2\ldots i_k}\equiv S_{i_1j_1}S_{i_2j_2}\cdots S_{i_kj_k} x^F_{i_1i_2\ldots i_k},
\end{equation}
we see that we can express the transformed pseudo-state as:
\begin{equation}
    \tilde{\rho}_U=\frac{1}{D}\sum_{k=0}^{2{N_R}}i^{\frac{k(k-1)}{2}}(\sqrt{\Lambda}_G)_{j_1q_1}\delta_{q_1I_1}(\sqrt{\Lambda}_G)_{j_2q_2}\delta_{q_2I_2}\cdots (\sqrt{\Lambda}_G)_{j_kq_k}\delta_{q_kI_k} z_{i_1i_2\ldots i_k}C_{[I_1}C_{I_2}\cdots C_{I_k]}
\end{equation}
The $\delta_{jI}$ symbol is a slight abuse of notation to indicate that only the $2n$ non-trivial entries of each $\Sigma_{iJ}$ contribute in the sum. We can select a convenient representation for the generators of the Clifford algebra such that the indices of these $2n$ entries can be identified with the first $2n$ generators, i.e
\begin{equation}
    C_{I=i}=C_i\otimes {\mathbb I}_{2^{{N_R}-n}\times 2^{{N_R}-n}} \label{form}
\end{equation}
Thus, we have the final form of the transformed pseudo-state
\begin{align}
    {\tilde \rho}_U&=\left(\frac{1}{2^n}\sum_{k=0}^{2n}\sqrt{\lambda_1\lambda_2\cdots \lambda_{k}}z_{[i_1i_2\cdots i_k]}C_{[i_1}C_{i_2}\cdots C_{i_k]}\right)\otimes \frac{{\mathbb I}_{2^{{N_R}-n}\times 2^{{N_R}-n}}}{2^{{N_R}-n}}    \label{eq:recon}\\
    &:={\tilde \sigma}\otimes \frac{{\mathbb I}_{2^{{N_R}-n}\times 2^{{N_R}-n}}}{2^{{N_R}-n}},
\end{align}
where $\lambda_k$ are the eigenvalues of the matrix $G$ defined in~\eqref{eq:gmat}.

\subsubsection{Effective and quasi-local entropies from pseudo-state}
Equipped with the pseudo-state, we can now pretend to be the unsuspecting observer and compute its effective entropy. First we must find the closest valid density matrix, similarly to what happens in real life tomography setups given measurement-induced negativity.

This process can be recast as the following optimization problem. Let $L({\cal H}_{\text{eff}})$ be the space of bounded linear operators on ${\cal H}_{\text{eff}}$ and ${\cal D}({\cal H}_{\text{eff}})$ the space of density matrices on ${\cal H}_{\text{eff}}$, i.e:
\begin{equation}
    D({\cal H}_{\text{eff}})=\{\rho\in L({\cal H}_{\text{eff}})\,\,\text{s.t:}\,\,\rho^\dagger=\rho\,\,,\,\,\rho\geq 0\,\,,\,\,\Tr(\rho)=1 \}
\end{equation}
Then the negativity correction algorithm is the map:
\begin{align}\label{eq:maxlike}
\Phi &: L(\cal{H}_{\text{eff}}) \longrightarrow \cal{D}(\cal{H}_{\text{eff}}) \\
\Phi(\sigma)&:=\operatorname*{arg\,min}_{\rho \in {\cal D}({\cal H}_{\text{eff}})} \Vert\rho-\sigma\Vert_2^2.
\end{align}

Smolin et al \cite{Smolin_2012} solved this optimization problem explicitly by obtaining the maximum likelihood mixed state assuming Gaussian noise. We will be referring to this as the negativity correction algorithm and use it to identify the closest ``physical state''. 

We first set up two lemmas in regard to this algorithm, which we will prove in appendix~\ref{sec:proofs}.
\begin{lemma}
    Consider a trace 1 matrix $\sigma$, with real eigenvalues undergoing the negativity correction algorithm. 
    Let $\lambda_1$ be the largest eigenvalue and $\lambda_2$ the second largest eigenvalue. 
    Then the corrected density matrix $\rho=\Phi(\sigma)$ will be pure if $\lambda_1-\lambda_2>1$.\label{theorem1}
\end{lemma}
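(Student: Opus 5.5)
The plan is to reduce the matrix problem to a problem about eigenvalues and then solve that problem in closed form. Take $\sigma$ to be Hermitian; this is the case for the pseudo-state, and it is what the hypothesis of real eigenvalues is meant to cover. Write its spectral decomposition as $\sigma=\sum_k \lambda_k \ket{k}\bra{k}$ with $\lambda_1\geq\lambda_2\geq\cdots$ and $\sum_k\lambda_k=1$.

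\textbf{Step 1: the minimizer is diagonal in the eigenbasis of $\sigma$.} Expanding the Hilbert--Schmidt distance gives $\Vert\rho-\sigma\Vert_2^2=\Tr\rho^2+\Tr\sigma^2-2\Tr(\rho\sigma)$. The first two terms depend only on the spectra of $\rho$ and $\sigma$. By von Neumann's trace inequality, $\Tr(\rho\sigma)\leq\sum_k\mu_k\lambda_k$, where the eigenvalues $\mu_k$ of $\rho$ are sorted in the same decreasing order as the $\lambda_k$. Equality holds when $\rho$ is diagonal in the eigenbasis $\{\ket{k}\}$ with matching order. So it suffices to minimize over diagonal $\rho$, and the problem becomes the Euclidean projection of the vector $\lambda$ onto the probability simplex: minimize $\sum_k(\mu_k-\lambda_k)^2$ subject to $\mu_k\geq0$ and $\sum_k\mu_k=1$.

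\textbf{Step 2: solve the simplex projection.} I would use the KKT conditions of this convex program, which is exactly the Smolin et al.\ algorithm. The solution has the form $\mu_k=\max(\lambda_k-\theta,0)$, where the shift $\theta$ is fixed uniquely by requiring $\sum_k\mu_k=1$. Because the objective is strictly convex, the minimizer is unique, so the form above characterizes it completely.

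\textbf{Step 3: use the gap condition.} Test the candidate $\theta=\lambda_1-1$, which gives $\mu_1=1$. For every $k\geq2$ we have $\lambda_k-\theta\leq\lambda_2-\lambda_1+1<0$ by the hypothesis $\lambda_1-\lambda_2>1$, so $\mu_k=0$. Then $\sum_k\mu_k=1$, so this $\theta$ satisfies the normalization. By uniqueness, $\Phi(\sigma)=\ket{1}\bra{1}$, which is pure. (The hypothesis is only needed to make $\mu_k=0$ for $k\geq 2$; if $\lambda_1-\lambda_2=1$ exactly, the same $\theta$ still works.)

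I expect Step 1 to be the main obstacle. It requires justifying that the optimal $\rho$ commutes with $\sigma$ and has its eigenvalues ordered like those of $\sigma$, which rests on the equality case of von Neumann's trace inequality. One could instead cite \cite{Smolin_2012} directly for this reduction. The remaining steps are elementary convex optimization.
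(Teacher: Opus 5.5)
Your proof is correct, but it takes a different route from the paper. The paper uses Smolin et al.'s iterative algorithm as a black box for $\Phi$ and reasons about its steps. It first proves an auxiliary monotonicity lemma (Lemma~\ref{Lemma}): once the positivity test $\lambda_i+a/i>0$ is met at some index, it would stay met at every smaller index. It then shows that at $i=2$ the accumulated negativity is $a=\Tr\sigma-\lambda_1-\lambda_2$, so $\lambda_2+a/2=\frac{1}{2}\bigl(1-(\lambda_1-\lambda_2)\bigr)\le 0$. Hence every eigenvalue except the top one is zeroed, and the top one becomes $\lambda_1+(1-\lambda_1)=1$.

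You instead derive the closed form $\mu_k=\max(\lambda_k-\theta,0)$ of the projection and simply verify the threshold $\theta=\lambda_1-1$. Uniqueness then does the job that the monotonicity lemma does in the paper. Your route has three advantages:
\begin{itemize}
\item It is self-contained: it proves, rather than cites, that the Frobenius-closest density matrix is obtained by projecting the spectrum onto the simplex.
\item It never uses $\Tr\sigma=1$, whereas the paper needs it to evaluate $a$.
\item The threshold form gives Lemma~\ref{theorem2} in one line. If $\theta$ is the threshold for $\tilde\sigma$, then $\theta/d$ is the threshold for $\tilde\sigma\otimes\mathbb{I}/d$, since each $\tilde\mu_k/d$ appears with multiplicity $d$ and the total is still $1$.
\end{itemize}
The paper's route keeps the step-by-step bookkeeping (stopping index, accumulated $a$), which it reuses verbatim in its proof of Lemma~\ref{theorem2}.

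The step you flag as the main obstacle is lighter than you expect, because you do not need the equality case of von Neumann's inequality. The inequality alone, in its Hermitian-eigenvalue form (valid for indefinite $\sigma$ after shifting by a multiple of $\mathbb{I}$), shows that $\ket{1}\bra{1}$ attains the global minimum. Uniqueness is automatic, because $\Phi$ is a Euclidean projection onto the closed convex set of density matrices. You can even skip the spectral reduction entirely. Note that $\sigma-\ket{1}\bra{1}\preceq(\lambda_1-1)\mathbb{I}$ holds exactly when $\lambda_1-\lambda_2\ge 1$. This gives $\Tr\bigl[(\sigma-\ket{1}\bra{1})(\rho-\ket{1}\bra{1})\bigr]\le 0$ for every density matrix $\rho$, which is the variational characterization of that projection.
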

More importantly, the following lemma guarantees that the structure of our reconstructed matrix $\tilde{\rho}_U$ is preserved under the negativity correction algorithm.
\begin{lemma}
    Consider a unit-trace matrix $\tilde{\sigma}$ with real eigenvalues and let $\Phi(\tilde{\sigma})=\tilde{\rho}'$.
    Then the matrix 
    \begin{equation}
        \mu:={\tilde \sigma}\otimes \frac{1}{d}{\mathbb I}_{d\times d}
    \end{equation}
    satisfies
    \begin{equation}
        \mu':=\Phi(\mu)={\tilde \rho}'\otimes \frac{1}{d}{\mathbb I}_{d\times d}.
    \end{equation}\label{theorem2}
\end{lemma}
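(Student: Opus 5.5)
We will work with the explicit solution of the negativity correction problem due to Smolin et al.\ \cite{Smolin_2012}. For a Hermitian unit-trace matrix $\sigma$ with eigen-decomposition $\sigma=\sum_a \lambda_a \ket{a}\bra{a}$, the Frobenius-norm projection onto ${\cal D}$ keeps the eigenbasis. It shifts and truncates the spectrum: $\Phi(\sigma)=\sum_a \max(\lambda_a-x,0)\ket{a}\bra{a}$, where the threshold $x$ is fixed uniquely by the trace condition $\sum_a \max(\lambda_a-x,0)=1$. Equivalently, $\Phi$ acts spectrally as the Euclidean projection of the eigenvalue vector onto the probability simplex.

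If one wants this self-contained, it follows from two facts. First, by a unitarily invariant norm argument (von Neumann trace inequality), the minimizer commutes with $\sigma$. Second, the KKT conditions for the simplex projection give the stated form.

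The key steps are as follows.

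(i) Diagonalize $\tilde\sigma=\sum_a \lambda_a \ket{a}\bra{a}$. Then $\mu=\tilde\sigma\otimes \mathbb I_d/d$ has eigenvectors $\ket{a}\otimes\ket{b}$ with eigenvalues $\lambda_a/d$, each with multiplicity $d$.

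(ii) Let $x$ be the threshold for $\tilde\sigma$. Apply the formula to $\mu$ with threshold $y$, which must satisfy
\begin{equation}
\sum_a d\,\max(\lambda_a/d-y,0)=\sum_a \max(\lambda_a-dy,0)=1 .
\end{equation}
Hence $y=x/d$ solves the trace condition.

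(iii) Since the function $y\mapsto \sum_a \max(\lambda_a-dy,0)$ is continuous and strictly decreasing wherever it is positive, the threshold is unique. Therefore $y=x/d$ is the threshold for $\mu$.

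(iv) Conclude that
\begin{equation}
\Phi(\mu)=\sum_{a,b}\frac{\max(\lambda_a-x,0)}{d}\ket{a}\bra{a}\otimes\ket{b}\bra{b}=\tilde\rho'\otimes \frac{\mathbb I_d}{d}.
\end{equation}

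The main obstacle is justifying the spectral form of $\Phi$, namely that the projection commutes with $\sigma$ even when $\sigma$ has degenerate eigenvalues. The degeneracy matters here because $\mu$ is degenerate by construction.

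To handle this, we will use strict convexity of $\Vert\rho-\sigma\Vert_2^2$ on the convex set ${\cal D}$, which makes the minimizer unique. We then use unitary covariance, $\Phi(W\sigma W^\dagger)=W\Phi(\sigma)W^\dagger$. Every unitary of the form $\mathbb I\otimes W$ fixes $\mu$, so uniqueness gives $\Phi(\mu)=(\mathbb I\otimes W)\Phi(\mu)(\mathbb I\otimes W^\dagger)$ for all such $W$. By Schur's lemma, $\Phi(\mu)$ must then have the form $\tau\otimes\mathbb I_d/d$ for some $\tau$.

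This symmetry argument gives the lemma without relying on the explicit formula. We have
\begin{equation}
\Vert\tau\otimes \mathbb I/d-\tilde\sigma\otimes\mathbb I/d\Vert_2^2=\Vert\tau-\tilde\sigma\Vert_2^2/d .
\end{equation}
Moreover, $\tau\otimes\mathbb I/d\in{\cal D}$ if and only if $\tau\in{\cal D}$. The minimization over this restricted family is therefore exactly the problem defining $\Phi(\tilde\sigma)$, so $\tau=\tilde\rho'$. This is the route we will write up, keeping steps (i)–(iv) as a consistency check.
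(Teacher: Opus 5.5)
Your proof is correct, but it follows a genuinely different route from the paper's.

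\textbf{The paper's route.} The paper works entirely through the iterative algorithm of Smolin et al. It fixes the index $c$ at which the spectrum of $\tilde\sigma$ is cut off and encodes this by the pair of inequalities $(c-1)\tilde\lambda_c+1-\sum_{i<c}\tilde\lambda_i\le 0<(c-1)\tilde\lambda_{c-1}+1-\sum_{i<c}\tilde\lambda_i$. It then checks by hand that, in the $d$-fold degenerate and $1/d$-rescaled spectrum of $\mu$, the check flips exactly between positions $d(c-1)+1$ and $d(c-1)$. For all other positions it relies implicitly on the monotonicity Lemma~\ref{Lemma}. From this it reads off that the shifted eigenvalues are $\tilde\lambda_i'/d$ with unchanged eigenvectors.

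\textbf{Your consistency check.} Steps (i)--(iv) are essentially the same computation. You phrase it through the closed-form threshold condition $\sum_a\max(\lambda_a-x,0)=1$, which makes the rescaling $y=x/d$ immediate and removes all the index bookkeeping.

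\textbf{Your main route.} This never uses the explicit form of $\Phi$. It rests on four facts:
\begin{itemize}
\item uniqueness of the Frobenius projection, by strict convexity;
\item covariance of $\Phi$ under the unitaries ${\mathbb I}\otimes W$ that fix $\mu$;
\item the commutant of ${\mathbb I}\otimes U(d)$ being $L(\mathcal H)\otimes{\mathbb I}$;
\item the identity $\Vert(\tau-\tilde\sigma)\otimes{\mathbb I}/d\Vert_2^2=\Vert\tau-\tilde\sigma\Vert_2^2/d$.
\end{itemize}

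\textbf{What each approach buys.} Your argument is shorter and basis-free. It automatically covers cases the paper's proof passes over, such as $\tilde\sigma\ge 0$ already, where no cutoff index $c$ exists. It also does not use Hermiticity or real eigenvalues of $\tilde\sigma$: it holds for arbitrary $\tilde\sigma$. By contrast, the algorithmic proof implicitly needs an orthonormal eigenbasis. What the paper's route buys is the explicit corrected spectrum, which Lemma~\ref{theorem1} and the purity argument actually rely on.

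\textbf{One correction to your framing.} Degeneracy is not a real obstacle to the spectral form. Once the threshold $x$ is fixed, $\Phi(\sigma)=\max(\sigma-x{\mathbb I},0)$ is defined by functional calculus, independently of how you choose the eigenbasis inside degenerate eigenspaces.
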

In other words, the action of performing the negativity correction algorithm on a state and the action of tensoring the state with a normalized identity block commute. This Lemma can be immediately applied to our transformed pseudo-state, mapping it to a valid density matrix of the form
\begin{equation}
    \mu'={\tilde \rho}'\otimes \frac{{\mathbb I}_{2^{{N_R}-n}\times 2^{{N_R}-n}}}{2^{{N_R}-n}}\quad,\quad {\tilde \rho}':=\Phi(\tilde{\sigma}).
\end{equation}

Then the effective entropy is simply the von Neumann entropy of $\mu'$
\begin{equation}
    S_{\text{eff}}:=S_{\text{vN}}({\mu'}).
\end{equation}
An unsuspecting observer who is also unaware of the compression map $V$
will not be able to easily resolve the identity block, but can only compute the entropy in a scrambled basis. As a result, they will find that 

\begin{equation}
    S_{\text{eff}} = S(\tilde{\rho}')+{N_R}-n \sim {N_R}-n \gg 1
\end{equation}
will increase linearly as the black hole evaporates.

In contrast, a more careful observer with no knowledge of the map $V$ but sufficient computational power would be able to identify that the truly physical degrees of freedom are encoded in ${\tilde \rho}'$ through analyzing the state's spectrum by exact diagonalization. Then they would learn that the actual dimension of the algebra that contributed to this physical process is much smaller than what it appears to be as the reconstructed state has no support over the identity block. Therefore, the entropy they would attribute to the state is the quasi-local entropy
\begin{equation}
    S_{\text{ql}}:=S_{\text{vN}}({\tilde \rho}'),
\end{equation}
which is clearly upper bounded by $n$. Note that since it is impossible to obtain the overlap matrix $V$ from measurements without also knowing the fundamental spectrum $x^F$, it is not possible to reconstruct the state $\rho_F$ exactly.

Indeed, as originally advertised, the presence of the identity block is merely an artifact of the observer's mistaken assumptions about the dimensionality of the Hilbert space, which can clearly be seen from the singular value decomposition. Its identification and removal  would lead an observer with sufficient computational power to use $S_{\text{ql}}$ rather than $S_{\text{eff}}$, which will be essential for obtaining the Page curve.

\subsection{Entropy curves}
\label{sec:Page curve}
To obtain the Page curve from $S_{\text{ql}}$ we still need to show that ${\tilde \rho}'$ is pure at the end of the evaporation and one avenue would be to prove that the Lemma~\ref{theorem1} applies. 
Unfortunately, finding the spectrum of the tomography matrix~\ref{eq:recon} has been technically challenging without adding additional structure. Therefore, we will restrict ourselves to fermionic Gaussian states for the rest of this section to simplify our calculations. 

The density matrix describing a fermionic Gaussian state can be brought to the following canonical form
\begin{equation}
    \rho=\bigotimes_{i=1}^n\begin{pmatrix}
        \frac{1+\nu_i}{2}&0\\
        0&\frac{1-\nu_i}{2}
    \end{pmatrix},
\end{equation}
where the $\nu_i$ are the singular values of the covariance matrix and are called \textit{Williamson eigenvalues} \cite{Bravyi:2004bud}.  For fermionic Gaussian  states they range in $\nu_i\in[-1,1]$. Pure states have singular values with $|\nu_i|=1$. These singular values can be found from observables through the covariance matrix, which is defined as
\begin{equation}
    \Gamma^{\rho}_{jk}=\frac{i}{2}\Tr\left(\rho[C_j,C_k] \right)=i\Tr(\rho C_jC_k)-i\delta_{jk},
\end{equation}
where $C_{j}, C_k$ are majorana operators.

This greatly simplifies the required measurements one would need to perform tomographically, as the spectrum is uniquely determined by the two-point correlators. Indeed, since the covariance matrix is real and antisymmetric, it admits the following block diagonal form through a rotation $R$ 
\begin{equation}
    \Gamma^\rho:=R\left( \bigoplus_{i=1}^n \begin{pmatrix}
        0&\nu_i\\
        -\nu_i&0
    \end{pmatrix}\right)R^T.
\end{equation}

To avoid keeping track of signs, we represent the matrix in a basis where all $\nu_i\geq 0$. 
The entropy of a Gaussian fermionic state can be expressed in terms of these singular values as:
\begin{equation}
    S(\rho_{2^n\times 2^n})=\sum_{i=1}^n h\left(\frac{1+\nu_i}{2} \right)\quad,\quad h(x)=-x\log x - (1-x)\log(1-x),
\end{equation}
therefore, the two point function contains all the information required for the entropy.\\

We assume that the state in the fundamental Hilbert space $\rho_F$ is a fermionic Gaussian state. Then the pseudo-state $\tilde\rho$ in the effective Hilbert space is also a Gaussian state, since all the $n$-point functions between $\tilde\rho$ and $\rho_F$ are related by linear transformation, as shown in Eq.~\eqref{state}, which preserves Wick factorization. 

To model an evaporating black hole, we identify $N_R(t)$ overlapping fermionic modes as the radiation degrees of freedom, and take $2N_R(t)=t$ to grow linearly with time. The pseudo-state $\tilde{\rho}(t)$ of these radiation modes is obtained from Eq.~\eqref{state}, with the corresponding state $\rho_F$ in the fundamental Hilbert space $\mathcal{H}_F$ being a random fermionic Gaussian pure state.

Since $\tilde{\rho}(t)$ is also Gaussian, it is fully determined by its two-point functions, or equivalently by its covariance matrix,
\begin{equation}\label{eq:Gammatilde}
\begin{split}
    \tilde{\Gamma}_{IJ}:=&\frac{i}{2}\Tr(\tilde\rho [C_I,C_J])=\sum_{ij}\frac{i}{2}\Tr(\rho_F[C_i,C_j])V(t)_{iI}V(t)_{jJ}\\
    =&\sum_{ij}V(t)_{iI}V(t)_{jJ}\Gamma_{ij}.
\end{split}
\end{equation}

The time-dependent overlap matrix $V(t)_{iI}$ is constructed as follows. We first generate a global $2n\times 2M$ matrix $V'$ by random orthogonal projection, with fixed $M>N_R(t)$ for all times considered. The column vectors in $V'$ satisfy the nearly orthonormal condition in Eq.~\eqref{eq:vecs}\footnote{Note that this mapping $V'$ only focuses on the embedding of the radiation degrees of freedom. However, it is straightforward to also construct a map that includes the interior degrees of freedom by attaching more columns to $V'$. Since this does not affect the radiation embedding,  we will only focus on the $N_R$ portion in this section.}. For each time $t$, we then obtain $V(t)$ by selecting $2N_R(t)$ columns of $V'$. Thus $V(t)_{iI}$, with $I=1,2,\cdots 2N_R(t)$ defines the overlap map for the radiation modes present at time $t$. 

From the construction, it is clear that when $N_R(t)>n$ the rank of  covariance matrix $\tilde\Gamma$ saturates at $n$. Therefore we expect at least $N_R-n$ zero Williamson eigenvalues. This implies that the corresponding pseudo-state takes the following form 
\begin{equation}\label{eq:rhotilde}
    S^T\tilde\rho S=\bigotimes_{I=1}^{n}\begin{pmatrix}
        \frac{1+\tilde\nu_I}{2}&0\\
        0&\frac{1-\tilde\nu_I}{2}
    \end{pmatrix}\bigotimes_{J=1}^{N_R-n} \begin{pmatrix}
        \frac{1}{2}&0\\
        0&\frac 1 2
    \end{pmatrix}
\end{equation}
for some $S\in SO(2N_R)$.  Here the $\tilde\nu_I$'s are the Williamson values of the pseudo-state.  This is consistent with the structure of identity block obtained in Eq.~\eqref{eq:recon}. 

Now we compute the quasi-local entropy and effective entropy from $\tilde\rho(t)$ numerically and analyze their dependence on time. 

As a first example, consider $2n=40$ fermions in the fundamental Hilbert space, and $M$ is chosen such that the average overlap $\overline{|V_I\cdot V_J|}=0.1$, for $I\neq J$. 
The pseudo state $\tilde\rho$ of the radiation modes is then adjusted using the negativity correction algorithm in Eq.~\eqref{eq:maxlike}.   The resulting entropy as a function of $t$ is shown in Fig~\ref{fig:page}. In the right panel, we also plot the number $k_g$ of good Williamson eigenvalues of the uncorrected reduced state, defined as those satisfying $\nu_i \in (0,1)$. 

\begin{figure}[t]
    \centering
    \begin{subfigure}{0.48\textwidth}
        \centering
        \includegraphics[width=\linewidth]{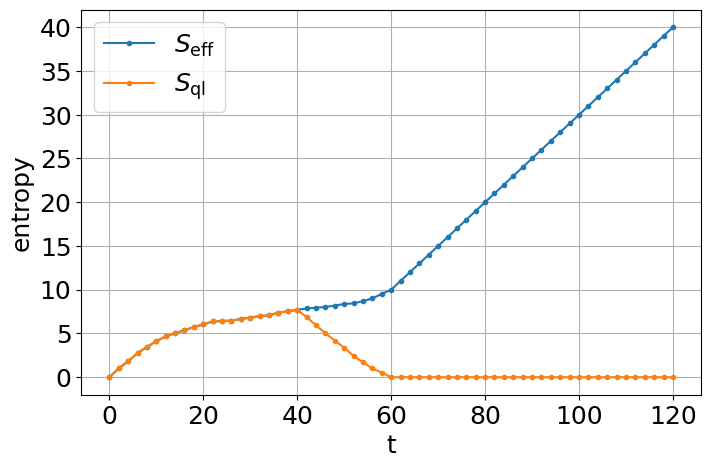}
        \caption{}
        \label{fig:Svst120}
    \end{subfigure}
    \hfill
    \begin{subfigure}{0.48\textwidth}
        \centering
        \includegraphics[width=\linewidth]{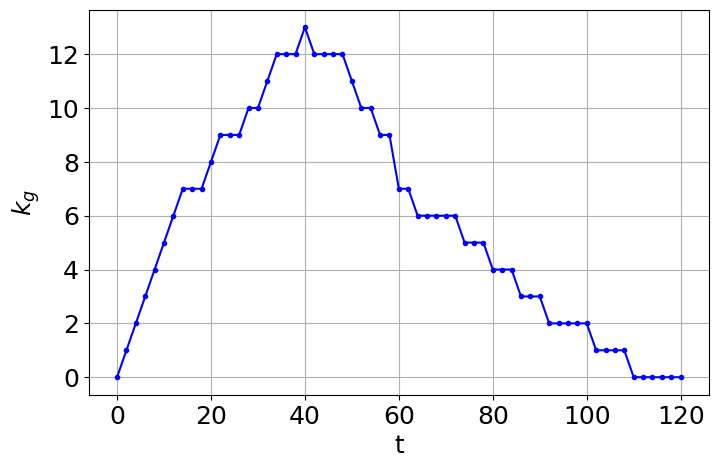}
        \caption{}
        \label{fig:good40}
    \end{subfigure}
    \caption{(a) Entropy of the radiation modes as a function of $t$. (b) The number of good Williamson values of the radiation state as a function of $t$. The average overlap $\epsilon=\overline{V_I\cdot V_J}=0.12$. Here $2n=40$ and a suitable $M$ is chosen to produce the specified average overlap.}
    \label{fig:page}
\end{figure}

For larger systems, an exact negativity correction becomes exponentially expensive. We therefore use an approximate procedure in which the spectrum of each subsystem is corrected independently, in analogy with a mean-field approximation. This gives the entropy curve shown in Fig.~\ref{fig:page2}, for a state $\rho_F$ of $2n=1000$ fermions in the fundamental space.  

Unlike the fundamental entropy, which returns to zero near the Page time $t \sim 2n$, the quasi-local entropy in these examples remains nonzero at the Page time and begins to decrease only shortly afterward. This property appears to be model-independent. Indeed, we observe a correlated trend between the quasi-local entropy $S_{\mathrm{ql}}$ and the number $k_g$ of good Williamson eigenvalues in both models above. 

We also comment on two model-dependent time scales $t_1<2n$ and $t_2>2n$ that we define to describe the behavior of the Williamson spectrum. For $t<t_1$, the number $k_g$ of good Williamson eigenvalues grows linearly with the number of fermionic modes in the radiation subsystem. In the intermediate regime $t_1<t<2n$, some Williamson eigenvalues begin to exceed $1$, but $k_g$ nevertheless continues to increase with the size of the radiation subsystem. At $t=2n$, the rank of the pseudo-state stops growing. Beyond this point, more Williamson eigenvalues move outside the physical range, and $k_g$ starts to decrease. Eventually, at $t=t_2$, all Williamson eigenvalues become larger than $1$. After the correction procedure, the radiation state is then pure, and this is precisely the point at which the quasi-local entropy vanishes. 

\begin{figure}[t]
    \centering

    \begin{subfigure}{0.48\textwidth}
        \centering
        \includegraphics[width=\linewidth]{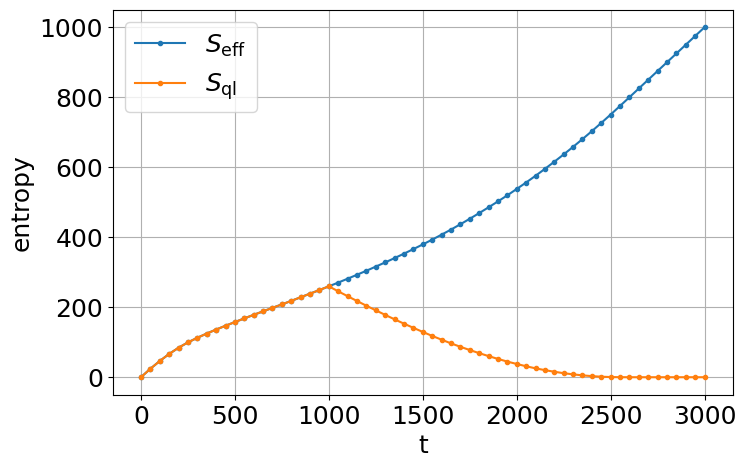}
        \caption{}
        \label{fig:Svst1500}
    \end{subfigure}
    \hfill
    \begin{subfigure}{0.48\textwidth}
        \centering
        \includegraphics[width=\linewidth]{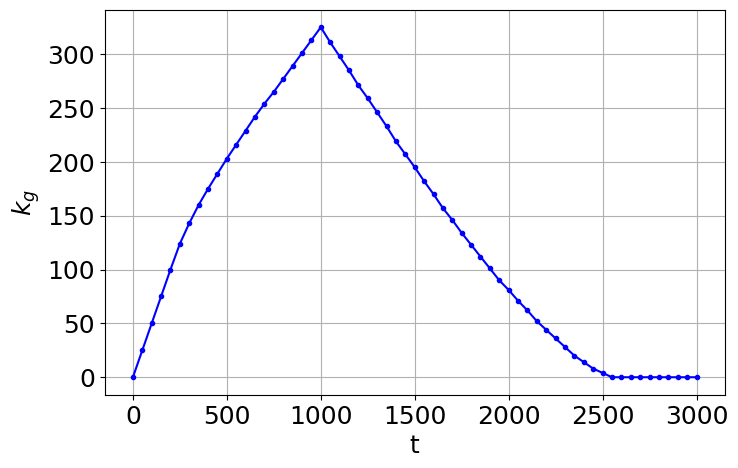}
        \caption{}
        \label{fig:good1500}
    \end{subfigure}

    \caption{(a) Entropy of the radiation modes as a function of $t$. (b) The number of good Williamson values of the radiation state as a function of $t$. The  number of  fermion in fundamental space is $2n=1000$. The average overlap $\epsilon=\overline{V_I\cdot V_J}=0.02$.}
    \label{fig:page2}
\end{figure}

Although the specific time $t_2$ depends on details of the model, we expect the quasi-local entropy to vanish at sufficiently late times as a universal feature.  In particular when $t=2M$, the radiation modes are described by the covariance matrix  $\tilde\Gamma$ in Eq.~\eqref{eq:Gammatilde} with all singular values of $V(t)$ greater than $1$ (Appendix~\ref{sec:singvalv}). In this regime, the  Williamson values of $\tilde\rho$ and the corrected state $\tilde{\rho}'$ have the following property.

\begin{lemma}\label{lm:william}
   Suppose the fundamental state $\rho_F$ is pure,  and the smallest singular value of $V(t)$ satisfies $\lambda_{\min}(V(t))\geq 1$. Then all the non-zero Williamson values $\tilde\nu_I$ of $\tilde{\Gamma}(t)$ satisfy $|\tilde{\nu}_I|\geq 1$. 
\end{lemma}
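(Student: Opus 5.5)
Since $\rho_F$ is a pure fermionic Gaussian state, its covariance matrix $\Gamma$ is real antisymmetric and satisfies $\Gamma^2=-\mathbb{I}_{2n}$. In other words, $\Gamma$ is orthogonal: all its Williamson values equal $1$, and $\Gamma=R\,J\,R^T$ with $J=\bigoplus_i\begin{pmatrix}0&1\\-1&0\end{pmatrix}$. The pseudo-state covariance is $\tilde\Gamma=V^T\Gamma V$ by Eq.~\eqref{eq:Gammatilde}. The nonzero Williamson values $\tilde\nu_I$ are the nonzero singular values of $\tilde\Gamma$, i.e.\ the square roots of the nonzero eigenvalues of $-\tilde\Gamma^2=\tilde\Gamma^T\tilde\Gamma$. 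The plan is to show that every nonzero eigenvalue of $\tilde\Gamma^T\tilde\Gamma$ is at least $1$.

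The key step is to move from the $2N_R$-dimensional effective space to the $2n$-dimensional fundamental space using the cyclic property of nonzero spectra. Compute
\[
\tilde\Gamma^T\tilde\Gamma=V^T\Gamma^T V V^T\Gamma V=V^T\Gamma^T G\,\Gamma V,\qquad G=VV^T .
\]
The nonzero eigenvalues of $AB$ and $BA$ coincide. Taking $A=V^T\Gamma^T$ and $B=G\Gamma V$, the nonzero spectrum of $\tilde\Gamma^T\tilde\Gamma$ equals that of the $2n\times2n$ matrix
\[
K=G\,\Gamma\,G\,\Gamma^T .
\]
Now use the hypothesis $\lambda_{\min}(V)\ge1$: it gives $G\succeq\mathbb{I}$, and in particular $G$ is invertible, so $K$ has full rank.

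To bound the spectrum of $K$ from below, rewrite it as a similar positive matrix. Conjugating by $G^{1/2}$, $K$ is similar to
\[
G^{1/2}\,\Gamma G\Gamma^T\,G^{1/2}.
\]
Since $\Gamma$ is orthogonal and $G\succeq\mathbb{I}$, we have $\Gamma G\Gamma^T\succeq\Gamma\Gamma^T=\mathbb{I}$. Hence
\[
G^{1/2}\,\Gamma G\Gamma^T\,G^{1/2}\succeq G\succeq\mathbb{I}.
\]
All eigenvalues of $K$ are therefore at least $1$, and so every nonzero $\tilde\nu_I^2\ge1$, which gives $|\tilde\nu_I|\ge1$. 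This argument even yields the sharper bound $|\tilde\nu_I|\ge\lambda_{\min}(V)^2$. Here singular values are read as those of $V$, so $G$ has eigenvalues $\lambda^2$; if the paper's convention is instead that $\lambda$ denotes eigenvalues of $G$, the same chain goes through.

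The main obstacle is bookkeeping rather than substance. Three points need care:
\begin{itemize}
\item Nonzero Williamson values of a degenerate antisymmetric matrix must be identified with the square roots of the nonzero eigenvalues of $\tilde\Gamma^T\tilde\Gamma$, with multiplicities matching in pairs.
\item The $AB$/$BA$ spectral transfer must be justified to preserve the nonzero spectrum including multiplicity.
\item Purity of the Gaussian $\rho_F$ must be shown to imply exactly $\Gamma\Gamma^T=\mathbb{I}$, using the canonical form with all $\nu_i=1$.
\end{itemize}
Once these are in place, the operator inequalities above are routine.
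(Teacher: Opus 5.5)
Your proof is correct, and it takes a different route from the paper's.

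\textbf{The paper's route.} It takes the SVD $V=S\Sigma O^T$ to block-reduce $\tilde\Gamma$ to $(\sqrt{\Lambda}\,O^T\Gamma^F O\sqrt{\Lambda})\oplus\mathbf{0}$. It then applies the cited multiplicative inequality $\sigma_i(AB)\ge\sigma_i(A)\,\sigma_{\min}(B)$ on each side. This gives the index-wise estimate $\sigma_I(\tilde\Gamma)\ge\sigma_{\min}(\sqrt{\Lambda})^2\,\sigma_I(\Gamma^F)$, and only at the end does it invoke purity, $|\nu_I|=1$.

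\textbf{Your route.} You never diagonalize $V$. The $AB$/$BA$ transfer moves the nonzero spectrum of $\tilde\Gamma^T\tilde\Gamma$ onto $K=G\Gamma G\Gamma^T$, and the $G^{1/2}$-similarity makes it symmetric. The bound is then a Loewner-order (congruence) argument that uses the orthogonality of $\Gamma$ directly.

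\textbf{What each buys.} Your argument is more elementary and basis-free. It avoids the SVD bookkeeping and the external singular-value inequality, and it shows at once that $K$ is invertible, so $\tilde\Gamma$ has rank exactly $2n$. The paper's argument gives an ordered comparison $|\tilde\nu_I|\ge\lambda_{\min}|\nu_I|$ that holds for any covariance matrix, including mixed $\rho_F$. Your step $\Gamma G\Gamma^T\succeq\Gamma\Gamma^T=\mathbb{I}$ instead uses purity at an essential point.

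\textbf{A small slip in your sharper bound.} The chain as you wrote it, ending in $\succeq G\succeq\mathbb{I}$, only gives eigenvalues of $K$ at least $\lambda_{\min}(G)=\lambda_{\min}(V)^2$. That yields $|\tilde\nu_I|\ge\lambda_{\min}(V)$, not the claimed $\lambda_{\min}(V)^2$. To get $\lambda_{\min}(V)^2$, which matches the paper's $\sigma_{\min}(\sqrt{\Lambda})^2$, use $\Gamma G\Gamma^T\succeq\lambda_{\min}(G)\,\mathbb{I}$ in the middle step. The symmetrized matrix is then $\succeq\lambda_{\min}(G)\,G\succeq\lambda_{\min}(G)^2\,\mathbb{I}$. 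This does not affect the lemma itself.
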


\begin{theorem}\label{lm:pure}
    Suppose that the assumptions of Lemma~\ref{lm:william} hold. Then, after applying the negativity-correction algorithm, the corrected pseudo-state $\tilde{\rho}'(t)$ is pure.
\end{theorem}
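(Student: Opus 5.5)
The plan is to use the Gaussian canonical form of the pseudo-state and reduce the negativity correction to a problem on the spectrum. By Lemma~\ref{lm:william}, every nonzero Williamson value of $\tilde\Gamma(t)$ satisfies $|\tilde\nu_I|\geq 1$. The remaining Williamson values vanish and, by Eq.~\eqref{eq:rhotilde}, only produce the identity block. By Lemma~\ref{theorem2} this block factors out of the correction, so it suffices to study the nontrivial factor
\[
\tilde\sigma=\bigotimes_{I=1}^{k}\mathrm{diag}\!\left(\tfrac{1+\tilde\nu_I}{2},\tfrac{1-\tilde\nu_I}{2}\right),
\]
written in the rotated basis, where $k\leq n$ and all $\tilde\nu_I\geq 1$ after fixing signs. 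The Gaussian rotation $S$ is unitary, and the map $\Phi$ (a Frobenius-norm projection onto density matrices) is unitarily covariant. Hence $\Phi$ acts diagonally: it replaces the eigenvalues of $\tilde\sigma$ by their Smolin et al.\ correction, which is a shift by a common constant followed by clipping at zero, with trace one preserved.

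The key step is to compare the largest eigenvalue $\lambda_1=\prod_I \frac{1+\tilde\nu_I}{2}$ with every other eigenvalue. Each other eigenvalue is a product in which at least one factor is $\frac{1-\tilde\nu_J}{2}\leq 0$.
\begin{itemize}
\item If $k=1$, the two eigenvalues are $\frac{1+\nu}{2}\geq 1$ and $\frac{1-\nu}{2}\leq 0$. The gap is $\nu\geq 1$, and we must check the strict inequality required by Lemma~\ref{theorem1}. I will avoid relying on strictness by arguing directly with the Smolin algorithm. Its output keeps the largest eigenvalues shifted by a constant $a$ chosen so that the kept entries sum to one. If only $\lambda_1$ is kept, the output is $\lambda_1+a=1$. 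A second eigenvalue $\lambda_2$ would survive only if $\lambda_2+a>0$, and with $a=1-\lambda_1$ this means $\lambda_2>\lambda_1-1$. So the condition to prove is that every other eigenvalue satisfies $\lambda_j\leq\lambda_1-1$, which is exactly the non-strict version of the gap. The degenerate case $\nu=1$ is already pure.
\item For general $k$, write $p_I=\frac{1+\tilde\nu_I}{2}\geq 1$ and $q_I=\frac{1-\tilde\nu_I}{2}=1-p_I\leq 0$. The largest eigenvalue is $\prod_I p_I$. Any other eigenvalue has modulus at most $\prod_{I\in A}p_I\prod_{J\notin A}(p_J-1)$ for a nonempty complement of $A$, since $|q_J|=p_J-1$.
\end{itemize}

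The main inequality to establish is then
\[
\prod_I p_I-\prod_{I\in A}p_I\prod_{J\notin A}(p_J-1)\geq 1\qquad\text{for all } p_I\geq 1 .
\]
I would prove this by induction on the size of the complement of $A$. Replacing one factor $p_J$ by $p_J-1$ lowers the product by $\prod_{I\neq J}p_I\geq 1$. Further replacements only decrease the product further in absolute value while the factors remain nonnegative. This gives $\lambda_1-\lambda_j\geq 1$ for every $j\neq 1$, so the Smolin procedure retains only the top eigenvector and the corrected state $\tilde\rho'$ is pure.

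I expect the main obstacle to be making rigorous that the correction really is the shift-and-clip rule, and that the non-strict gap suffices. In particular, boundary cases with $\tilde\nu_I=1$ produce zero eigenvalues and ties. I would first handle this by a short verification of the stopping criterion of the Smolin algorithm. If that proves awkward, I would perturb the $\tilde\nu_I$ slightly upward and use continuity of $\Phi$, which holds because $\Phi$ is a projection onto a convex set.
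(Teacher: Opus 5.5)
Your proposal is correct and follows essentially the same route as the paper. You strip off the identity block with Lemma~\ref{theorem2}, then show that the top eigenvalue $\prod_I p_I$ of the Gaussian product state exceeds every other eigenvalue by at least $1$, so the Smolin correction (Lemma~\ref{theorem1}) keeps only the top eigenvector. The differences are refinements: you bound all other eigenvalues in modulus rather than explicitly identifying the second-largest one, and your direct shift-and-clip check of the non-strict gap covers the boundary case $|\tilde\nu_I|=1$ allowed by Lemma~\ref{lm:william}, which the paper sidesteps by assuming $\delta\nu_i>0$ strictly.
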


The proofs are given in Appendix~\ref{app:pure}.

\section{General overlapping operators}
\label{sec:generalOQ}
Although a simple construction of randomly overlapping fermions reproduces many desirable features of the entropy of an evaporating black hole, such a toy model inevitably overlooks many important physical details that are required for a more convincing QI description of the black hole information problem. 
\subsection{Limitation of the toy model }
A physical check is the approximate recovery of local effective field theory correlators. Although we did not impose any spatially local structures or more refined features of quantum field theories, one may check whether such a toy model can accommodate these constraints. 

In our fermionic toy model, we assumed that each single-particle effective operator can be represented as a linear combination of single-particle operators in the fundamental Hilbert space,
\begin{equation}\label{eq:single}
    \Psi_I =\sum_i V_I^iC_i.
\end{equation}
This relation allowed us to derive an embedding of a fundamental state into the effective Hilbert space; see Eq.~\eqref{state}. Conversely, given an effective state $\rho_{\mathrm{eff}}$, one may ask whether there exists a state $\rho_F$ in the fundamental Hilbert space that approximately reproduces the low-weight correlation functions of $\rho_{\mathrm{eff}}$:
\begin{equation}
    \Tr(\rho_F \Psi^{[I_1}\Psi^{I_2}\cdots \Psi^{I_k]})\approx \Tr\left(\rho_{\text{eff}}C^{I_1}C^{I_2}\cdots C^{I_k}\right).
\end{equation}

Using Eq.~\eqref{state}, this problem becomes a weight-by-weight matching of the fermionic correlation coefficients of the EFT state and the fundamental state for all fermionic strings with weight $k\le 2n$. More explicitly, we seek a fundamental state $\rho_F$ that minimizes the mismatch,
\begin{equation}
    \min_{\rho_F}\sum_{I_1I_2\cdots I_k}\left|\Tr(\rho_{\text{eff}}C^{[I_1}C^{I_2}\cdots C^{I_k]})-V^{[I_1}_{i_1}V^{I_2}_{i_2}\cdots V^{I_k]}_{i_k}\Tr(\rho_F C^{[i_1}C^{i_2}\cdots C^{i_k]})\right|^2, \qquad k\le 2n.
\end{equation}

However, the number of independent $k$-point functions available in the fundamental state is much smaller than the number of independent EFT $k$-point coefficients. As a result, the minimized cost generally remains large; see Appendix~\ref{app:compression} for details. This obstruction originates from the single-particle overlap assumption in Eq.~\eqref{eq:single}, and should motivate us to consider more general overlapping-operator models.

\subsection{General construction}
Another question is whether the structures we observe in the previous section is specific to fermionic operators.
We argue that these structures persist by providing a more general way of embedding the   overlapping degrees of freedom into fundamental degrees of freedom, which includes Pauli operators. In particular, we recover a factorized form analogous to Eq.~\eqref{eq:recon} for observed effective state. 

Let $\mathcal{H}_d$ be the $d$-dimensional fundamental Hilbert space.  
Choose a Hilbert--Schmidt orthonormal basis of operators

\[
\{C_a\}_{a=0}^{d^2-1}\subset \mathrm{Herm}(\mathcal{H}_d),
\qquad 
\mathrm{Tr}(C_a C_b)=\delta_{ab}, \qquad C_0=I/d
\]

The basis is complete in the sense that
\begin{equation}
\sum_{a=0}^{d^2-1} (C_a)_{ij}\,(C_a^\ast)_{kl}
= \delta_{ik}\,\delta_{jl}.
\label{eq:Ca-complete}
\end{equation}

To construct the general overlapping basis, we  build an over-complete family of operators on $\mathcal{H}_d$, related to $C_a$'s by
\begin{equation}
B_I := \sum_{a=1}^{d^2-1} v_{Ia}\, C_a,
\qquad I=1,\dots,K.
\label{eq:CI-def}
\end{equation}
where $v_{Ia}$'s are $K$ number of  i.i.d.\ random unit vectors drawn from $\mathbb{R}^{d^2}$.

Since the $\vec v_I$ are random and isotropic, their empirical Gram matrix is
approximately proportional to the identity:
\begin{equation}
\sum_{I=1}^{K} v_{Ia}\,v_{Ib}^\ast \;\approx\; f\,\delta_{ab},
\qquad 
f \approx \frac{K}{d^2}.
\label{eq:v-gram}
\end{equation}

Plugging \eqref{eq:CI-def} into the completeness relation gives an ``over-completeness'' condition:
\begin{align}
\sum_{I=1}^{K} (B_I)_{ij}\,(B_I^\ast)_{kl}
&=
\sum_{I=1}^{K}\sum_{a,b=1}^{d^2-1}
v_{Ia}\,v_{Ib}^\ast\,(C_a)_{ij}\,(C_b^\ast)_{kl} \notag \\
&\approx
f\sum_{a=1}^{d^2-1} (C_a)_{ij}\,(C_a^\ast)_{kl}
=
(1-d^{-2})f\,\delta_{ik}\,\delta_{jl}
\label{eq:CI-overcomplete}
\end{align}

where we take $K=D^2-1$, corresponding to a larger Hilbert space $\mathcal{H}_D$. Choose an orthonormal basis of traceless Hermitian operators of dimension $D$, 
\[
\{P_I\}_{I=0}^{D^2-1}\subset \mathrm{Herm}_0(\mathcal{H}_D),
\qquad 
\mathrm{Tr}(P_I P_J)=\delta_{IJ}, \qquad P_0=I/D.
\]

 Given a density matrix $\rho$ on $\mathcal{H}_d$,  define the linear reconstruction
on $\mathcal{H}_D$ using the measured expectation values $\mathrm{Tr}(\rho B_I)$:
\begin{equation}
\rho_{\mathrm{rec}}
:= P_0 + \sum_{I=1}^{D^2-1} \mathrm{Tr}(\rho\,B_I)\,P_I.
\label{eq:rho-rec}
\end{equation}
where $P_0 := \frac{1}{D}\,\mathbf{1}_D$ is the identity component. 

Using \eqref{eq:CI-def}, this can be written explicitly in terms of the original
basis coefficients $\mathrm{Tr}(\rho C_a)$:
\begin{equation}
\rho_{\mathrm{rec}}
= P_0 + \sum_{I=1}^{D^2-1}\sum_{a=1}^{d^2-1}
v_{Ia}\,\mathrm{Tr}(\rho\,C_a)\,P_I.
\label{eq:rho-rec-expanded}
\end{equation}

Now we show this state reconstructed from measurement outcomes spans only a small subspace in $\mathcal{H}_D$. 

Consider a linear transformation $\mathcal{E}$ on operators over $\mathcal{H}_D$
which fixes the identity $\mathbb I_D$ and acts as an orthogonal rotation on the traceless
subspace, then

\begin{equation}
\mathcal{E}(P_0)=P_0,
\qquad 
\mathcal{E}(P_I)=\sum_{J=1}^{D^2-1} O_{IJ} P_J,
\qquad 
O\in SO(D^2-1).
\label{eq:E-action}
\end{equation}
Applying $\mathcal{E}$ to \eqref{eq:rho-rec-expanded} gives
\begin{align}
\mathcal{E}(\rho_{\mathrm{rec}})
&= P_0 + \sum_{I=1}^{D^2-1}\sum_{a=1}^{d^2-1}
v_{Ia}\,\Tr(\rho\,C_a)\,\mathcal{E}(P_I) \notag\\
&= P_0 + \sum_{I=1}^{D^2-1}\sum_{a=1}^{d^2-1}\sum_{J=1}^{D^2-1}
v_{Ia}\,O_{IJ}\,\Tr(\rho\,C_a)\,P_J \notag\\
&= P_0 + \sum_{J=1}^{D^2-1}\sum_{a=1}^{d^2-1}
( O^{\mathsf T} v )_{Ja}\,\Tr(\rho\,C_a)\,P_J .
\label{eq:E-rho-rec}
\end{align}

Take the singular value decomposition of $v$:
\begin{equation}
v = S\,\Lambda\,Q^{\mathsf T},
\qquad 
S\in SO(D^2-1),\quad Q\in SO(d^2-1),
% \quad \Lambda=\mathrm{diag}(\lambda_1,\dots,\lambda_{d^2}),
\label{eq:svd}
\end{equation}
where $\Lambda$ is a $(D^2-1)\times (d^2-1)$ rectangular diagonal matrix (with eivenvalues $\{\lambda_a\}_{a=1}^{d^2}$ and zeros
below the first $d^2$ rows).

Substitute \eqref{eq:svd} into \eqref{eq:E-rho-rec}:
\begin{align}
\mathcal{E}(\rho_{\mathrm{rec}})
&= P_0 + \sum_{J=1}^{D^2-1}\sum_{a=1}^{d^2-1}
\bigl(O^{\mathsf T}S\Lambda Q^{\mathsf T}\bigr)_{Ja}\,\Tr(\rho\,C_a)\,P_J \notag\\
&= P_0 + \sum_{J=1}^{D^2-1}\sum_{a,b=1}^{d^2-1}
\bigl(O^{\mathsf T}S\Lambda\bigr)_{Ja}\,Q_{ba}\,\Tr(\rho\,C_b)\,P_J .
\label{eq:E-rho-rec-svd}
\end{align}

For the following choice of the rotation matrix $O$,
\begin{equation}
O = S \quad\Longrightarrow\quad O^{\mathsf T}S = \mathbf{1}_{D^2-1},
\label{eq:choose-O}
\end{equation}
 \eqref{eq:E-rho-rec-svd} simplifies to
\begin{align}
\mathcal{E}(\rho_{\mathrm{rec}})
&= P_0 + \sum_{J=1}^{D^2-1}\sum_{a,b=1}^{d^2-1}
\Lambda_{Ja}\,Q_{ba}\,\Tr(\rho\,C_b)\,P_J \notag\\
&= P_0 + \sum_{a,b=1}^{d^2-1}
\lambda_a\,Q_{ba}\,\Tr(\rho\,C_b)\,P_a .
\label{eq:E-rho-rec-aligned}
\end{align}
Then we identify the rotated operator basis on the $d$-dimensional space
\begin{equation}
C'_a := \sum_{b=1}^{d^2-1} Q_{ba}\,C_b
\qquad\Longrightarrow\qquad
\Tr(\rho\,C'_a)=\sum_{b=1}^{d^2-1} Q_{ba}\,\Tr(\rho\,C_b).
\label{eq:Cprime}
\end{equation}
Hence
\begin{equation}
\mathcal{E}(\rho_{\mathrm{rec}})
= P_0 + \sum_{a=1}^{d^2-1}\lambda_a\,\Tr(\rho\,C'_a)\,P_a .
\label{eq:E-final}
\end{equation}

If one chooses the large-space operator basis $\{P_I\}_{I=1}^{D^2-1}$ so that its subset
$\{P_a\}_{a=1}^{d^2-1}$ act only on a $d$-dimensional factor in a tensor
decomposition $\mathcal{H}_D \cong \mathcal{H}_d\otimes \mathcal{H}_{D/d}$
(assuming $d\mid D$), then \eqref{eq:E-final} can be viewed as producing an
embedded state of the form
\[
\mathcal{E}(\rho_{\mathrm{rec}})\;\equiv\; \rho'_{d\times d}\ \otimes\
\frac{1}{D/d}\mathbf{1}_{D/d},
\]
for some $\rho'_{d\times d}$ determined by the coefficients
$\lambda_a\,\Tr(\rho\,C'_a)$. 

Therefore, a computationally powerful observer who collects enough effective qudits and sufficiently many state samples can still infer that the states actually lie in a lower-dimensional subspace than their effective description suggests.

\section{Discussion}
\label{sec:disc}
In this work, we have introduced the idea of overlapping qubits and fermions to the context of the black hole information problem. These quantum information constructions can provide an operable mathematical framework for approximate tensor factorization and approximately local quantum field theories. In addition, they offer a physically intuitive way to understand how information leaves a black hole from a Hilbert space perspective by realizing unitarity and non-violent non-locality while reproducing a Page curve in the quasi-local description of random fermionic Gaussian states. Spectral analyses of the general operator algebra reconstruction are still needed to verify whether the purity observation in the fermionic Page curve can be reproduced in other physically relevant systems. 

Given the immense freedom one has in constructing different overlapping compression maps, we anticipate a rich landscape of theories, some of which may be useful candidates that can extend the idea of approximate/non-violent non-locality to quantum field theories. 
An immediate extension is to identify more general overlapping maps that can approximately reproduce low-point functions of local quantum field theories, thus providing a direct path for reconciling with the no-drama postulate in the firewall paradox. 
 The same approximate factorization that avoided the violation of the no-cloning theorem indicates that the conventional strong subadditivity arguments~\cite{Almheiri_2013,Mathur_2009} for the emergence of firewalls would not apply either. The tradeoff is that now locality is only approximate in the algebraic sense, where operators on distinct overlapping qubits commute up to $\epsilon$ corrections.

The breakdown of factorizability  in the quasi-local picture also means that familiar concepts and no-go theorems that are built upon well-defined subsystems (or subalgebras) of the Hilbert space such as entanglement  and entropy inequalities need to be re-examined with care. In particular, a proper choice of the overlapping map that can reproduce the low-point correlators of local effective field theories would be necessary to truly preserve the no-drama postulate in a physical sense instead of purely relying on the appearance of approximate locality in operator algebra only. A ``Schr\"odinger picture" analogue of overlapping qubits, would also allow us to refine the current toy model, by introducing the entanglement between the overlapping ``Hawking modes". We will leave these open questions for future work.

In constructing the overlapping operators, non-isometric maps are used \cite{Akers:2022qdl,Cao_2025,Antonini:2024yif,noniso1,noniso2,noniso3,noniso4}. This thus also calls into question the nature of the effective dynamics if the fundamental dynamics are assumed to be unitary. Formally, the effective dynamics should not be unitary, but the breaking of unitarity may only become more apparent as the operators become higher weight or under operations where the non-orthogonal corrections become dominant. Though this may be surprising in our current quantum field theory intuition, it need not contradict our physical observations and may be an interesting avenue for future explorations of more foundational questions of quantum mechanics, such as quantum-to-classical transition. 

Another key aspect of overlapping operators is that they are highly redundant since they can be restricted down to a smaller subset of truly independent operators in a multitude of different ways. This is reminiscent of gauge redundancy and suggests that there might be an underlying gauge in overlapping qubit constructions. This is exactly where the presence of non-locality would be attributed to in a conventional quantum field theory, but since there is not a QFT analogue of overlapping qubits, this connection is not well understood. However, at the classical level the presence of overlap in a regularized scalar field theory has been identified~\cite{friedrich2026areascalingdynamicaldegrees}. The interpretation there might be different, but understanding how these results are generalized at the quantum (and eventually infinite-dimensional) level will hopefully also illuminate the connections with gauge theory.

 On a more speculative note, we would like to point out two observed similarities between certain features observed in the overlapping qubits toy model and in the replica wormholes of the gravitational path integral. In~\cite{Penington:2019kki} it was argued that the gravitational path integral performs an implicit averaging over an ensemble of theories such that different black hole microstates $\ket{\psi_i},\ket{\psi_j}$ are orthogonal on average while still having an overlap of non-zero magnitude, which is exponentially suppressed by the black hole entropy, i.e.
\begin{equation}
    \overline{\braket{\psi_i}{\psi_j}}=0\quad,\quad \overline{|\braket{\psi_i}{\psi_j}|^2}\sim e^{-S}\quad,\quad i\neq j.
\end{equation}
The average was speculated to be over microscopic quantities $R_{ij}$ such that:
\begin{equation}
    \braket{\psi_i}{\psi_j}=\delta_{ij}+e^{-\frac{S_0}{2}}R_{ij}.
\end{equation}
First, approximately orthogonal objects appear in both constructions. Furthermore, they are accompanied by random variables in the natural product defined between them and an implicit averaging. A similar averaging can be seen in the antisymmetrization of the overlapping fermionic operators, where one averages over possible non-equivalent orderings of the operators.

It is still unclear if there is a connection between overlapping qubits and the gravitational path integral approach, but a more concrete link between the two could provide a guide for more realistic overlapping qubit constructions. It will also be interesting to understand any potential connection between the factorized identity block and black hole islands.
\section*{Acknowledgments}

We thank Ning Bao, Aidan Chatwin-Davies, Wissam Chemissany, Oliver Friedrich, Jason Pollack, John Preskill, and Zolt\'an Zimbor\'as for helpful comments and discussions. We also acknowledge the use of ChatGPT and Claude Opus 4.8 for beautifying plot graphics.

\appendix
\section{Negativity correction algorithm}
\label{sec:proofs}
In this appendix we will provide proofs for Lemma~\ref{theorem1} and Lemma~\ref{theorem2}.
Let us begin by briefly reviewing the algorithm of~\cite{Smolin_2012}, which finds the closest valid density matrix under the Frobenius norm
to the experimentally noisy matrix we are studying.
Consider the $d\times d$ matrix $\sigma$ with eigenvalues $\{\lambda_i\}$ such that $\lambda_i\geq \lambda_{i+1}$ and associated eigenvectors $\ket{\lambda_i}$. This algorithm takes $\sigma$ as an input and outputs the density matrix
\begin{equation}
    \rho:=\sum_{i=1}^d \lambda'_i \ketbra{\lambda_i}
\end{equation}
saturating
$$
\min_{\rho} \Vert\rho-\sigma\Vert_2^2,
$$
where the minimization is over density matrices. 

The algorithm is as follows:

\begin{enumerate}
    \item Initialize $a=0$ and start from $i=d$.
    \item If $\lambda_i+\frac{a}{i}\leq 0$, set $\lambda'_i=0$, update $a\rightarrow a+\lambda_i$ and $i\rightarrow i-1$. Repeat step 2.
    \item If $\lambda_i+\frac{a}{i}>0$ for all $j\leq i$ set $\lambda'_j=\lambda_j+\frac{a}{i}$.
    \item The density matrix minimizing the 
    % trace distance 
    Frobenius distance $\Vert\rho-\sigma\Vert_2^2$ is
    \begin{equation}
        \rho:=\sum_{i=1}^d \lambda'_i\ketbra{\lambda_i}
    \end{equation}
\end{enumerate}

Essentially, the algorithm redistributes the negativity equally to the remaining positive eigenvalues such that eventually only non-negative eigenvalues remain.

We are ready to begin from the proof of~\ref{theorem1}. 
It will prove convenient to prove the following lemma first:
\begin{lemma}
    Continuing the iterative process of the negativity correction algorithm past the point where the check fails would never cause the check to succeed again.\label{Lemma}
\end{lemma}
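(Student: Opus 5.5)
The statement concerns the loop in step 2 of the algorithm. Write the eigenvalues in decreasing order $\lambda_1\geq\dots\geq\lambda_d$. When the algorithm reaches index $i$ with accumulated negativity $a_i$, we have $a_i=\sum_{j>i}\lambda_j$, where the sum runs over the eigenvalues already set to zero (and $a_d=0$). The check at index $i$ is $\lambda_i+a_i/i\leq 0$. It is convenient to clear the denominator and work with
\begin{equation}
    g(i):=i\lambda_i+a_i=i\lambda_i+\sum_{j=i+1}^{d}\lambda_j .
\end{equation}
The check at $i$ succeeds exactly when $g(i)\leq 0$. "The check fails" means $g(i)>0$. The claim to prove is therefore that $g(i)>0$ implies $g(k)>0$ for every $k<i$, even if we keep iterating (updating $a$ and decrementing $i$) as though the check had succeeded.

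By induction it suffices to show $g(i-1)\geq g(i)$. This is a one-line computation:
\begin{equation}
    g(i-1)-g(i)=(i-1)\lambda_{i-1}+\lambda_i-i\lambda_i=(i-1)(\lambda_{i-1}-\lambda_i)\geq 0,
\end{equation}
and it uses only the ordering $\lambda_{i-1}\geq\lambda_i$. So $g$ is non-increasing in $i$, or equivalently non-decreasing as the algorithm moves toward smaller indices. Once $g(i)>0$, every later value $g(k)$ with $k<i$ is at least $g(i)>0$, so the check can never succeed again.

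As a corollary, step 3's condition "$\lambda_j+a/i>0$ for all $j\leq i$" follows directly: for $j\leq i$ we have $\lambda_j\geq\lambda_i$, so $\lambda_j+a_i/i\geq\lambda_i+a_i/i>0$. This is presumably how the lemma will be used in proving Lemma~\ref{theorem1}.

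The only step needing care is bookkeeping. One must confirm that, in the hypothetical continued iteration, $a$ keeps being updated by adding $\lambda_i$ exactly as in step 2, so that the formula $a_k=\sum_{j>k}\lambda_j$ remains valid past the failure point. With that identification the monotonicity argument goes through unchanged.
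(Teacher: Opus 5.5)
Your proof is correct and is essentially the paper's argument. The paper shows $\lambda_{i-1}+\frac{a+\lambda_i}{i-1}\geq\frac{i}{i-1}\bigl(\lambda_i+\frac{a}{i}\bigr)>0$ using only $\lambda_{i-1}\geq\lambda_i$, then inducts; multiplying through by $i-1$ gives exactly your $g(i-1)\geq g(i)$, so clearing denominators just makes the identity $g(i-1)-g(i)=(i-1)(\lambda_{i-1}-\lambda_i)$ more transparent. One small remark: in the proof of Lemma~\ref{theorem1} the paper uses this lemma in contrapositive form, not for the step-3 condition: $g(2)\le 0$ forces $g(i)\le 0$ for all $i>2$, so the algorithm really does reach $i=2$ with $a=\sum_{i\ge 3}\lambda_i$. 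Your corollary is nonetheless also true.
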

\begin{proof}
We want to show that
\begin{equation}
    \lambda_i+\frac{a}{i}>0\Rightarrow \lambda_{i-1}+\frac{a+\lambda_i}{i-1}>0.
\end{equation}
Then the statement is trivially proven by induction, since we have built into the premise that the check succeeds at some point.
However, the above relation follows from
\begin{align}
    \lambda_{i-1}+\frac{a+\lambda_i}{i-1}&=\lambda_{i-1}+\frac{a}{i}+\frac{\lambda_i+\frac{a}{i}}{i-1}\\
    &\geq\lambda_{i}+\frac{a}{i}+\frac{\lambda_i+\frac{a}{i}}{i-1}\\
    &>0,
\end{align}
where in the second line we used $\lambda_i\geq \lambda_{i+1}$ and in the last line we used 
\begin{align}
    \lambda_i+\frac{a}{i}&>0.\quad\Box
\end{align}
\end{proof}
Now we are ready to prove Lemma~\ref{theorem1}.
\begin{proof}
Note that the corrected density matrix $\rho$ is pure iff it has only one non-zero eigenvalue.
Therefore it suffices to show that the check fails at $\lambda_2$. 
Lemma~\ref{Lemma} ensures that if the check fails here it can't have succeeded earlier.
In other words we need to show that
\begin{equation}
    \lambda_2+\frac{a}{2}\leq 0.
\end{equation}
Notice that at this step we have
\begin{equation}
    a=\sum_{i=3}^d \lambda_i=\Tr(\sigma)-\lambda_1-\lambda_2.\label{next}
\end{equation}
Then
\begin{align}
    \lambda_2+\frac{a}{2}&=\lambda_2+\frac{\Tr(\sigma)-\lambda_1-\lambda_2}{2}\nonumber\\
    &=\lambda_2+\frac{1-\lambda_1-\lambda_2}{2}\nonumber\\
    &=\frac{1-(\lambda_1-\lambda_2)}{2}\nonumber\\
    &\leq 0,
\end{align}
where in the last line we used $\lambda_1-\lambda_2>1$, which is given to us by the premise.
For completeness, it should be said that the left hand side of~\ref{next} is the check at the final step, which succeeds. 
Therefore, the state is indeed corrected to a pure state. $\Box$
\end{proof}
Finally, we will prove Lemma~\ref{theorem2}.
\begin{proof}
    Let $\tilde{\lambda}_c$ be the eigenvalue where the check of the correction algorithm applied to $\tilde{\sigma}$ succeeds for the last time.
    Essentially, the label $c$ in addition to the spectrum of $\tilde{\sigma}$ uniquely determines the correction algorithm.
    The fact that the check succeeds at $\tilde{\lambda}_c$ means that
    \begin{align}
        0&\geq c\tilde{\lambda}_c+\tilde{a}\nonumber\\
        &=c\tilde{\lambda}_c+\Tr(\tilde{\sigma})-\sum_{i={1}}^c \tilde{\lambda_i}\nonumber\\
        &=(c-1)\tilde{\lambda}_c+1-\sum_{i={1}}^{c-1} \tilde{\lambda_i}.\label{a11}
    \end{align}
    Similarly we know that
    \begin{align}
        0&<(c-1)\tilde{\lambda}_{c-1}+{\tilde{a}+\tilde{\lambda}_c}\nonumber\\
        &=(c-1)\tilde{\lambda}_{c-1}+\Tr(\tilde{\sigma})-\sum_{i={1}}^{c-1} \tilde{\lambda_i}\nonumber\\
        &=(c-1)\tilde{\lambda}_{c-1}+1-\sum_{i={1}}^{c-1} \tilde{\lambda_i}.\label{a12}
    \end{align}
    This fixes the spectrum of the corrected density matrix $\tilde{\rho}':=\Phi({\tilde \sigma})$ to be
\begin{equation}
    {\tilde \lambda_i}'=\begin{cases}
        0&i\geq c\\
        \tilde{\lambda}_i+\frac{\tilde{a}+\tilde{\lambda}_c}{c-1}& i<c
    \end{cases}.
\end{equation}    
Now we move to the matrix $\mu$. 
The tensoring with the identity rescales all eigenvalues by $\frac{1}{d}$ and increases their multiplicity to $d$.
At the last of the rescaled $\lambda_c=\frac{{\tilde \lambda}_c}{d}$ we have
\begin{align}
    \lambda_c+\frac{a}{d(c-1)+1}&=\lambda_c+\frac{1-d\sum_{i=1}^{c-1}\lambda_i-\lambda_c}{d(c-1)+1}\nonumber\\
    &=\frac{1+d\left((c-1)\lambda_c-\sum_{i=1}^{c-1}\lambda_i\right)}{d(c-1)+1}\nonumber\\
    &=\frac{(c-1)\tilde{\lambda}_c+1-\sum_{i=1}^{c-1}\tilde{\lambda}_i}{d(c-1)+1}\nonumber\\
    &\leq 0,
\end{align}
where in the last step we used~\ref{a11}.
In the next step we have
\begin{align}
    \lambda_{c-1}+\frac{a+\lambda_c}{d(c-1)}&=\lambda_{c-1}+\frac{1-d\sum_{i=1}^{c-1}\lambda_i}{d(c-1)}\nonumber\\
    &=\frac{(c-1)\tilde{\lambda}_{c-1}+1-\sum_{i=1}^{c-1}\tilde{\lambda}_i}{d(c-1)}\nonumber\\
    &>0,
\end{align}
where in the last step we used~\ref{a12}.
Therefore, the check fails and the eigenvalues of $\mu'$ are (each with their $d$-fold multiplicity)
\begin{align}
    \lambda_i'&=\begin{cases}
        0&i\geq c\\
        \lambda_i+\frac{a+\lambda_c}{d(c-1)}&i<c
    \end{cases}\nonumber\\
    &=\begin{cases}
        0&i\geq c\\
        \frac{1}{d}\left(\tilde{\lambda}_i+\frac{1-d\sum_{i=1}^{c-1}\lambda_i}{c-1}\right)&i<c
    \end{cases}\nonumber\\\
    &=\begin{cases}
        0&i\geq c\\
        \frac{1}{d}\left(\tilde{\lambda}_i+\frac{\tilde{a}+\tilde{\lambda}_c}{c-1}\right)&i<c
    \end{cases}\nonumber\\
    &=\frac{1}{d}\tilde{\lambda_i}'.
\end{align}
But since the correction algorithm leaves the eigenvectors unchanged and all eigenvalues of $\tilde{\rho}'$ have been rescaled by $\frac{1}{d}$ with their multiplicity increased to $d$ this means that
\begin{equation}
    \Phi(\mu):=\mu'=\tilde{\rho}'\otimes \frac{1}{d}{\mathbb I}_{d\times d}.\quad\Box
\end{equation}
\end{proof}
\section{Singular values of V}
\label{sec:singvalv}
 In the main text we claimed that the singular values of the matrix containing the overlapping vectors are always greater than one, which led to the enlargement/shrinking of the spectrum for our density matrices under embedding/compression, which was a key feature for our results for Gaussian fermionic states. In this appendix we will prove this claim.\\

The starting point in the constructions we are using is a random orthogonal matrix $A$ from which we project out some rows. So first, consider the matrix:
\begin{equation}
    W_{m\times n}:=P_{m\times n}A_{n\times n}\quad,\quad P_{m\times n}:=\begin{pmatrix}
        {\mathbb I}_{m\times m}&0_{m\times (n-m)}
    \end{pmatrix}
\end{equation}
The projection is random in our case, but without loss of generality for our result, let's imagine we are keeping only the first $m$ rows. Now the singular values are determined by the eigenvalues of:
\begin{equation}
    WW^T=PAA^TP^T=PP^T={\mathbb I}_{m\times m}
\end{equation}
Thus, the singular values of $W$ at this point are all 1, since singular values are positive by convention.\\

However, we also have to renormalize the columns of the matrix so that they have unit length. The columns prior to the projection had norm equal to 1, so the projection makes the norm of the overlapping vectors smaller than 1 (at best it leaves it equal to 1, but for that to happen we would need to project out only zero entries randomly from a random matrix, which is for all practical purposes impossible). Thus, since the renormalizing factor is:
\begin{equation}
    D_I:=\frac{1}{\sqrt{\sum_{j}W_{jI}W_{jI}}}\geq 1
\end{equation}
We can think of this action as a right multiplication with a diagonal matrix whose entries are all greater (or equal to 1). Thus:
\begin{equation}
    V=WD=PAD
\end{equation}
However:
\begin{equation}
    D^2\succeq {\mathbb I}_{n\times n}\Leftrightarrow AD^2A^T \succeq {\mathbb I}_{n\times n}\Leftrightarrow x^T (AD^2 A^T -{\mathbb I}_{n\times n})x\geq 0 \quad,\quad  \forall x\in {\mathbb R}^n
\end{equation}
This means, that the above inequality also holds for the $x$ that are of the form:
\begin{equation}
    x=P^Ty \quad,\quad \forall y\in{\mathbb R}^m
\end{equation}
Therefore:
\begin{equation}
    x^T(AD^2A^T-{\mathbb I}_{n\times n})x\geq 0 \Rightarrow y^T(PAD^2A^TP^T-{\mathbb I}_{m\times m})y\quad,\quad \forall y \in{\mathbb R}^m
\end{equation}
Thus:
\begin{equation}
    VV^T\succeq {\mathbb I}_{m\times m}\Leftrightarrow U \Lambda U^T\succeq {\mathbb I}_{m\times m}\Leftrightarrow \Lambda \succeq {\mathbb I}_{m\times m} 
\end{equation}
Which completes the proof that the singular values of $V$ are always (for practical purposes) greater than 1.

\section{Proof of Lemma~\ref{lm:william} and Theorem~\ref{lm:pure}}\label{app:pure}
\subsection*{Proof of Lemma~\ref{lm:william}}
\begin{proof}
The covariance matrix $\tilde\Gamma$ of our pseudo-state is
\begin{equation}
    \tilde\Gamma_{IJ}=\frac{i}{2}\Tr(\tilde{\rho}[C_{I},C_{J}])=\sum_{ij}V_{iI}V_{jJ}\Gamma_{ij}^F,
\end{equation}
where $V$ is $N\times n$ matrix with the SVD decomposition 
\begin{align}
    V=S\Sigma O^T.
\end{align}
Therefore the orthogonal matrix $S\in SO(N)$ can transform the pseudo-state covariance matrix to 
\begin{align}
    S^T\tilde\Gamma S =\left(\sqrt{\Lambda} O^T\Gamma^F O\sqrt{\Lambda}\right)\oplus \mathbf{0},
\end{align}
where $\Lambda$ is the $n\times n$ diagonal matrix constructed from singular values of $V$.

The covariance matrix $\Gamma^F$ can be brought to the canonical form by a rotation $R$.
Let the Williamson eigenvalues  of the fundamental covariance matrix be $\nu_i$. Then we have:
\begin{equation}
    \Gamma^F_{ij}=\left[R\left(\bigoplus_{i=1}^n \begin{pmatrix}
        0&\nu_i\\-\nu_i&0
    \end{pmatrix} \right)R^T\right]_{ij}
\end{equation}
Since the state $\rho_F$ in the fundamental space is pure, we expect $|\nu_i|=1$.

Now we sort the singular values $\sigma_I$ to descending order and  bound the non-zero singular values of $\tilde\Gamma$ using the singular value inequility, $\sigma_i(AB)\geq \sigma_i(A)\sigma_{\min}(B)$~\cite{doi:10.1137/0732088}, which leads to

\begin{equation}
    \begin{aligned}
    \sigma_I\left(\tilde\Gamma\right)=&\sigma_I\left(\sqrt{\Lambda}\Gamma^F\sqrt{\Lambda}\right)\\
    \geq& \sigma_I\left(\sqrt{\Lambda}\Gamma^F\right)\sigma_{\min}\left(\sqrt{\Lambda}\right)\\
    \geq & \sigma_I\left(\Gamma^F\right)\sigma_{\min}\left(\sqrt{\Lambda}\right)^2, \qquad I=1,2,\cdots n.
\end{aligned}
\end{equation}
Since the singular values of $\tilde\Gamma$ are equal to the absolute value of its Williamson values, we proved the following inequility for the non-zero Williamson values $\nu_I$ of $\tilde\Gamma$
\begin{align}
    |\tilde\nu_I|\geq \lambda_{\min}|\nu_I|\geq \lambda_{\min}\geq 1, \qquad I=1,2,3\cdots n,  
\end{align}
where we have used the bound of $\lambda_{\min}$ derived in Appendix~\ref{sec:singvalv}. 
\end{proof}

\subsection*{Proof of Theorem~\ref{lm:pure}}
\begin{proof}
An implication of~\ref{eq:recon} is that since the singular values of the matrix $V$ are always greater than 1 (you can see appendix~\ref{sec:singvalv} for the proof), embedding the overlapping fermions in a higher-dimensional Hilbert space, enlarges the spectrum of the operator. Since we have a pure state in the fundamental Hilbert space by premise, this justifies us in parameterizing
\begin{equation}
    \nu_i=1+2\delta \nu_i\quad,\quad \delta \nu_i>0,
\end{equation}
which is also supported by numerics. {Without loss of generality, the $\nu_i$ are also labeled in non-increasing order.} Thus:
\begin{equation}
    \rho=\bigotimes_{i=1}^n\begin{pmatrix}
        1+\delta \nu_i&0\\0&-\delta \nu_i
    \end{pmatrix}
\end{equation}
Now we are ready to show that Lemma~\ref{theorem1} applies by focusing on the two greatest eigenvalues. From the structure of the density matrix it is clear that these are
\begin{equation}
    \nu_1=\prod_{i=1}^n(1+\delta \nu_i)\quad,\quad \nu_2=\delta \nu_1\delta \nu_2\prod_{i=3}^n(1+\delta \nu_i)
\end{equation}
Then, we see that
\begin{align}
    \nu_1-\nu_2=\left(1+\delta \nu_1+\delta \nu_2\right)\prod_{i=3}^n(1+\delta \nu_i)>1
\end{align}
and indeed Lemma~\ref{theorem1} applies. Thus, the correction algorithm always gives us back a pure state.
% \TK{The last sentence shouldn't be here in his format}
\end{proof}

\section{Compression map}\label{app:compression}
Define the embedding map $\mathcal{N}_{E}$ as in Eq.~\eqref{state} : 
\begin{equation}
    \mathcal{N}_{E}(\rho_F)=\frac{1}{2^N}\sum_{k=0}^{2N}\sum_{I_1I_2\cdots I_k} (i)^{k(k-1)}\Tr(\rho_F \Psi_{[I_1}\Psi_{I_2}\cdots \Psi_{I_k]})C_{[I_1}C_{I_2}\cdots C_{I_k]}. 
\end{equation}

\begin{equation}\label{eq:state_approx}
    \mathcal{N}_{E}(\rho_f)\approx \rho_{\text{eff}},
\end{equation}
The condition Eq.~\eqref{eq:state_approx} should be understood as a constraint on the low weight correlation functions of $\rho_F$
\begin{equation}
    \Tr(\rho_F \Psi_{[I_1}\Psi_{I_2}\cdots \Psi_{I_k]})\approx \Tr\left(\rho_{\text{eff}}C_{I_1}C_{I_2}\cdots C_{I_k}\right).
\end{equation}

This becomes a matching of the fermionic coefficients between the effective state and fundamental state weight by weight as long as the weight $k$ of fermionic string is below $2n$. More specifically, we require $\rho_F$ to minimize the following difference:
\begin{equation}
    \min_{\rho_F}\sum_{I_1I_2\cdots I_k}\left|\Tr(\rho_{\text{eff}}C_{[I_1}C_{I_2}\cdots C_{I_k]})-V_{i_1I_1}V_{i_2I_2}\cdots V_{i_kI_k}\Tr(\rho_F C_{[i_1}C_{i_2}\cdots C_{i_k]})\right|^2, \qquad k\le 2n.
\end{equation}

This is a least square fitting problem at each weight $k$. Define $(VV^T)_{ij}:=\sum_I V_{iI}V_{jI}$, we could write down the solution as 
\begin{equation}
\begin{aligned}
    &\Tr(\rho_F C^{i_1}C^{i_2}\cdots C^{i_k})\\
    =&\sum_{j_1j_2\cdots j_k}\sum_{I_1I_2\cdots I_k}(VV^T)^{-1}_{i_1j_1}(VV^T)^{-1}_{i_2j_2}\cdots (VV^T)^{-1}_{i_kj_k}V_{j_1I_1}V_{j_2I_2}\cdots V_{j_kI_k}\,\Tr(\rho_{\text{eff}}C_{I_1}C_{I_2}\cdots C_{I_k}). 
\end{aligned}
\end{equation}

This defines the compression map 
\begin{equation}
     \rho_F = \mathcal{N}_C(\rho_{\text{eff}}). 
\end{equation}

If the state $\rho_{\text{eff}}$ is Gaussian, the correlation functions factorizes, and one can show that the solution $\rho_{F}$ of the fitting problem is also a Gaussian state. For instance, the fourth moment is
\begin{equation}
\begin{aligned}
    &\Tr(\rho_F C_{i_1}C_{i_2} C_{i_3}C_{i_4})\\
    =&(VV^T)^{-1}_{i_1j_1}(VV^T)^{-1}_{i_2j_2}V_{j_1I_1}V_{j_2I_2}\left({\Gamma^{I_1I_2}\Gamma^{I_3I_4}+\cdots}\right)(VV^T)^{-1}_{i_3j_3}(VV^T)^{-1}_{i_4j_4}V_{j_3I_3}V_{j_4I_4}\\
    =&\Gamma_{c}^{i_1i_2}\Gamma_{c}^{i_3i_4}+\cdots
\end{aligned}
\end{equation}
where \begin{equation}
\begin{aligned}
    \Gamma_{c}^{i_1i_2}\equiv&\sum_{j_1j_2I_1I_2}(VV^T)^{-1}_{i_1j_1}(VV^T)^{-1}_{i_2j_2}V_{j_1I_1}V_{j_2I_2}\, \Gamma^{I_1I_2}\\
    = & [(VV^T)^{-1}V\,\Gamma\, V^T(VV^T)^{-1}]^{i_1i_2}, 
\end{aligned}
\end{equation}
is the compressed covariant matrix. Therefore the Gaussian state $\rho_{\text{eff}}$ is compressed to a Gaussian state $\rho_{F}$ whose covariant matrix is solved by the least-square fitting problem with the covariant matrix of $\rho_{\text{eff}}$. 

\subsection*{Overlap of compressed states}
We pick two states $\rho_{\text{eff}}$ and $\rho'_{\text{eff}}$ randomly from the set 
\begin{equation}
    \rho=U\prod_{j=1}^{N}\frac{1+(-1)^{s_j}i\psi_{2j-1}\psi_{2j}}{2}U^{\dagger}
\end{equation}
with $s_j\in \{0,1\}$ being random variable, and $U$ being some fixed unitary.  Then we compute the fidelity of the compressed states 
\begin{equation}
    \mathcal{F}\left({\mathcal{N}_C(\rho_{\text{eff}}), \mathcal{N}_C(\rho'_{\text{eff}})}\right),
\end{equation}
using the expression in covariant matrix \cite{Swingle_2019},
\[
\mathcal{F}(\rho_1,\rho_2)
=
\left(
\det\!\left[\frac{\mathbb{I}-\Gamma_1\Gamma_2}{2}\right]\,
\det\!\left[
\mathbb{I}+\left(\mathbb{I}+\left((\Gamma_1+\Gamma_2)(\mathbb{I}-\Gamma_1\Gamma_2)^{-1}\right)^2\right)^{1/2}
\right]
\right)^{1/4}.
\]

We find that the fidelity decays exponentially with the Hamming distance between the two bit-strings:
\begin{equation}
    \mathcal{F}\left({\mathcal{N}_C(\rho_{\text{eff}}), \mathcal{N}_C(\rho'_{\text{eff}})}\right)\sim e^{-\alpha(N,n)|s-s'|},
\end{equation}
where $\lvert s-s'\rvert$ denotes the Hamming distance and $\alpha(N,n)$ is a decay rate that depends on $N$ and $n$.

For fixed compression ratio $N/n$, we plot $\log \mathcal{F}$ as a function of $\lvert s-s'\rvert$ in Fig.~\ref{fig:compression1}. To determine how the decay rate depends on the compression ratio, we calculate
\begin{equation}
\log\alpha
\equiv
\log\left(
-\frac{\log \mathcal{F}}{\lvert s-s'\rvert}
\right).
\end{equation}
 In Fig.~\ref{fig:compression2}, we plot $\log\alpha$ against $\log(N/n)$ at fixed Hamming distance and fit the data to
\begin{equation}
\log\alpha=\beta\log(N/n)+c.
\end{equation}
The fitted slope is $\beta\simeq-4.02$, implying
\begin{equation}
\alpha(N,n)
=\mathcal{O}\left[\left(\frac{n}{N}\right)^{|\beta|}\right].
\end{equation}

For two typical bit strings of length $N$, the Hamming distance is of order $N/2$. Their fidelity therefore scales as
\begin{equation}
    \mathcal{F}\sim \exp\left[
-O\left(
n^{|\beta|}N^{1-|\beta|}
\right)
\right].
\end{equation}

Consequently, the compressed states become asymptotically orthogonal provided that $n\geq c N^{1-|\beta|^{-1}}$. 

\begin{figure}[t]
    \centering
    \begin{subfigure}{0.48\textwidth}
        \centering
        \includegraphics[width=\linewidth]{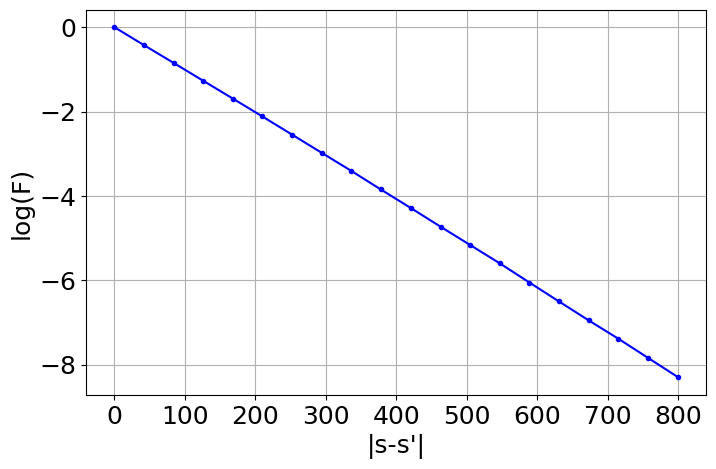}
        \caption{}
        \label{fig:compression1}
    \end{subfigure}
    \hfill
    \begin{subfigure}{0.48\textwidth}
        \centering
        \includegraphics[width=\linewidth]{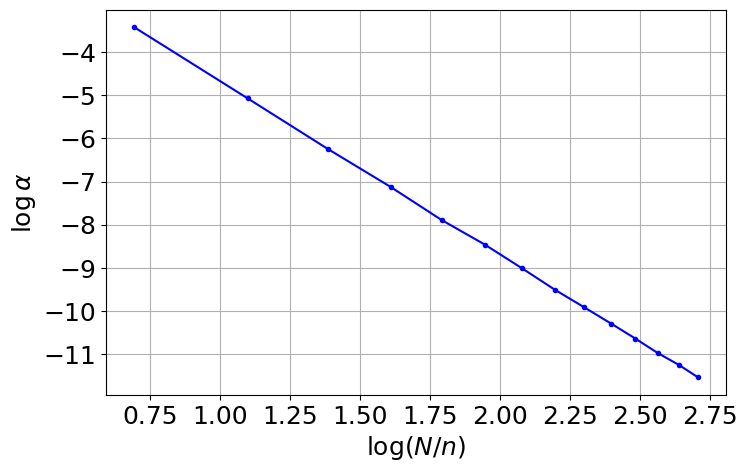}
        \caption{}
        \label{fig:compression2}
    \end{subfigure}
    \caption{(a) Logarithm of the fidelity as a function of the Hamming distance $|s-s'|$, for fixed $n=300$ and $N=800$. (b) $\log\alpha$ as a function of $\log(N/n)$, with $n=300$ and fixed Hamming distance $|s-s'|=40$. The linear fit yields a slope of $\beta\approx-4.02$. }
    \label{fig:compression}
\end{figure}

\bibliographystyle{JHEP}
\bibliography{biblio,charles_ref}

\end{document}